\documentclass[11pt]{amsart}
\usepackage{amsmath, amssymb, amsthm}
\usepackage{enumerate}
\usepackage{booktabs}
\usepackage{bbold} 
\usepackage{mathtools}
\usepackage{comment}
\usepackage{geometry}
\usepackage{bm}
\usepackage{tikz}
\usetikzlibrary{arrows.meta, decorations.markings}
\usetikzlibrary{calc}
\usepackage[utf8]{inputenc}
\newtheorem{theorem}{Theorem}[section]
\newtheorem{lemma}[theorem]{Lemma}
\newtheorem{proposition}[theorem]{Proposition}

\theoremstyle{remark}
\newtheorem{definition}[theorem]{Definition}
\newtheorem{remark}[theorem]{Remark}
\newtheorem{example}[theorem]{Example}
\newtheorem{assumption}[theorem]{Assumption}

\newcommand{\bA}{\mathbf{A}}

\newcommand{\R}{\mathbb{R}}
\DeclareMathOperator{\Div}{div}
\DeclareMathOperator{\Curl}{curl}

\definecolor{RoyalBlue}{RGB}{25, 25, 112} 
\definecolor{DarkBrown}{RGB}{101, 67, 33}

\numberwithin{equation}{section}

\title{Semiclassical resonances under local magnetic fields}
\author{Pavel Exner}
\address[P. Exner]{\newline Doppler Institute for Mathematical Physics and Applied Mathematics, 
Czech Technical University, Břehová 7, 11519 Prague, Czechia \\
Department of Theoretical Physics, NPI, Academy of Sciences, 25068 Řež near Prague, Czechia}
\email{exner@ujf.cas.cz}
\author{Ayman Kachmar}
\address[A. Kachmar]{\newline Department of Mathematics \normalfont{and} PDE Research Unit--Center for Advanced Mathematical Sciences (CAMS), American University of Beirut,
Beirut, Lebanon.}
\email{ak292@aub.edu.lb}

\date{\today}

\begin{document}

\begin{abstract}
We study resonances for the semiclassical magnetic Laplacian in the full plane with a compactly supported magnetic field in the framework of semiclassical complex scaling and black box scattering theory. Assuming that the magnetic field is locally constant, we prove the existence of semiclassical resonances near the Landau levels with exponentially small imaginary parts. We also prove that resonances emerge from a magnetic step discontinuity along a curved interface or a
non-degenerate magnetic well, and in the vicinity of anharmonic Landau levels if the field has an isolated zero.
\end{abstract}

\maketitle


\section{Introduction}

The behavior of a quantum system and its classical counterpart is sometimes substantially different. As a simple example, consider a charged particle in two dimensions exposed to a constant magnetic field supported in a disk. Outside it the trajectory is a straight line; if it hits the disk in a non tangential way, it leaves  again and the time spent in the disk is  shorter the stronger the field is. Our aim is to show that, in contrast, the quantum particle can be trapped temporarily  in the field support and the lifetime of such a quasi-stationary state increases exponentially with the field strength. The effect is of a semiclassical nature and occurs for a wide range of local magnetic field configurations.

\subsection{Setting and main question}
We investigate the existence of resonances for the semiclassical magnetic Laplacian in $\mathbb R^2$, 
\begin{equation}\label{eq:def-ml}
P(h)=(-ih\nabla-\bA)^2
\end{equation}
under the following hypotheses on the vector potential $\bA$:
\begin{enumerate}
    \item $\bA\in H^1_{\rm loc}(\mathbb R^2)$;
    \item There exists $R_0>0$ such that $\bA=\bA_\alpha$ on $\mathbb R^2\setminus B(0,R_0)$, where $B(0,R_0)$ is the open disk of radius $R_0$ centered at $(0,0)$ and $\bA_\alpha$ is the Aharonov-Bohm potential with a flux $\alpha>0$,
\begin{equation}\label{eq:def-AB}
\bA_\alpha (x)=\Bigl(-\frac{\alpha x_2}{|x|^2},\frac{\alpha x_1}{|x|^2}\Bigr).
\end{equation}
\end{enumerate}
Our hypotheses ensure that the magnetic field $B=\Curl\bA$ is square integrable and with compact support. In particular,  $B$ vanishes outside the disk $B(0,R_0)$. Note that by Stokes' theorem, the flux $\alpha$ must satisfy the compatibility condition
\begin{equation}\label{eq:cond-alpha}
\alpha=\frac1{2\pi}\int_{B(0,R_0)}B(x)dx.
\end{equation}

As we have indicated, the motivation for our study of $P(h)$ arises from the question whether a strong local magnetic field can create long-living resonances. Considering the operator $(-i\nabla - b\bA)^2$ with a large parameter $b$, one can set $h = b^{-1}$ in which case the rescaled problem reduces to the analysis of semiclassical resonances for the operator $P(h)$.\medskip

\subsection{On the definition of the operator}

We now turn to the precise definition of the operator $P(h)$ starting from the quadratic form
\[
Q_{\bA,h}(u) = \int_{\R^2} |(-i\nabla - \bA)u|^2 \, dx, \quad u \in C_c^\infty(\R^2),
\]
which is non-negative and closable. Its Friedrichs extension yields a self-adjoint operator (our operator $P(h))$  on $L^2(\R^2)$ with domain
\[\mathcal D=\{u\in L^2(\R^2)\colon (-ih\nabla-\bA)u\in L^2(\R^2;\mathbb C^2),(-ih\nabla-\bA)^2u\in L^2(\R^2)\},\]
where 
\[(-ih\nabla-\bA)^2=-h^2\Delta-2i h\bA\cdot\nabla-ih(\Div \bA) +|\bA|^2.\] 
The vector potential $\bA$ is divergence free outside $B(0,R_0)$ and decays at infinity like $1/|x|$. The domain of $P(h)$ is then independent of $h$ and is given by
\begin{equation}\label{eq:Domain-H}
\mathcal D=\{u\in H^1(\R^2)\colon \Delta u\in L^2(\R^2)\}=H^2(\R^2).
\end{equation}

Since the magnetic field is compactly supported and $P(h)\geq 0$ in the sense of quadratic forms, the spectrum of $P(h)$ is $[0,+\infty)$ and is purely essential. 

\subsection{Overview of results}

The aim of this article is to study resonances of $P(h)$ in the semiclassical limit $h \to 0$. Resonances are usually defined as poles of the meromorphically continued resolvent $(P(h)-z)^{-1}$. We define them via complex scaling, specifically as eigenvalues of a Fredholm operator obtained from $P(h)$ through an analytic dilation of the coordinates in the region exterior to the magnetic field (see Section~\ref{sec:def res} and Definition~\ref{def:resonances}). This approach allows us to transform the resonance problem into a spectral problem for a non-selfadjoint operator, whose discrete spectrum in a suitable complex sector corresponds precisely to the resonances of the original Hamiltonian.\medskip

In this work, we demonstrate that a local magnetic field of described type creates resonances in the semiclassical/strong field regime. We investigate five distinct scenarios and prove the following:

\begin{itemize}
    \item \emph{Constant fields:} If the magnetic field is locally constant, then there are resonances  whose real parts are close to the Landau levels, and whose imaginary parts are negative and exponentially small (see Theorem~\ref{thm:main} below).
    \item \emph{Isolated zeros of a local magnetic field:} If the magnetic field vanishes at a point, without loss of generality identified with the origin, and behaves locally like $r^\gamma$, where $r$ is a radial variable and $\gamma$ is a positive parameter, then then the local harmonic potential of the previous case is replaced by an anharmonic one, the eigenvalues of which we call anharmonic Landau levels. For such fields, there are resonances whose real parts are close to these levels, and whose imaginary parts are negative and exponentially small (see Theorem~\ref{thm:main-ah} below).
\item \emph{Magnetic wells:} 
If the magnetic field attains a nondegenerate and positive local minimum (a magnetic
well), then it gives rise to resonances with a semiclassical expansion in powers of h (see
Theorem~\ref{thm:main**} below).
\item \emph{Sharp magnetic interface:} If the magnetic
field is locally piecewise constant with a jump discontinuity across a curve modeling a sharp interface, 
and if the curvature of the discontinuity  has a local non-degenerate maximum, then there are resonances 
 which are asymptotically described by the field strength and the geometry of the interface (see Theorem~\ref{thm:main*} below).
\item \emph{Zero-field island:} If the magnetic field vanishes on an open set $\omega$ (a hole) and is positive in a neighborhood of $\partial\omega$, then there are resonances whose real parts are close to $h^2$ times the eigenvalues of the Dirichlet Laplacian on $\omega$ (see Theorem~\ref{thm:zero island} below).
\end{itemize}  

The asymptotic expansions of the real parts of these resonances exhibit distinct scaling regimes in the semiclassical parameter $h$, as summarized in Table~\ref{tab:resonance-expansions}. 

\begin{table}[htbp]
\centering
\caption{Asymptotics of the real part of resonances $z_n(h),$ $n=0,1,\dots,$ as $h\to0$.}
\label{tab:resonance-expansions}
\begin{tabular}{@{}p{0.38\textwidth} p{0.52\textwidth}@{}}
\toprule
\textbf{Local behavior} & \textbf{Real part asymptotics} \\
\midrule
Constant field & $\operatorname{Re} z_n(h)\sim (2n+1)h$ \\
Anharmonic Landau Hamiltonian& $\operatorname{Re} z_n(h)\sim \Lambda_n^\gamma h^{1+\frac{\gamma}{2+\gamma}}$ \\
Sharp magnetic interface & $\operatorname{Re} z_n(h)\sim a_0 h + a_1 h^{3/2} + (2n+1)a_2 h^{7/4}$ \\
Magnetic well & $\operatorname{Re} z_n(h)\sim b_0 h + (2n b_1+b_2) h^2$ \\
Zero-field island& $\operatorname{Re} z_n(h)\sim \ell_n h^2$ \\
\bottomrule
\end{tabular}
\end{table}

The real coefficients $\Lambda_n^\gamma$ and $\ell_n$ are eigenvalues of limiting operators, while the real coefficients $a_0,a_1,a_2$ and $b_0,b_1,b_2$,  depend on the geometric and spectral data of each configuration. Their explicit expressions, involving the flux, curvature, and magnetic well parameters, are provided in the statements of Theorems \ref{thm:main*} and \ref{thm:main**}, respectively.

\subsection{Statement of results}

\subsubsection*{Locally constant fields}

Our first result concerns locally constant magnetic fields\footnote{That is $B=b_0>0$ on an open set $\Omega_0$;  without loss of generality we may take $b_0=1$.}. Let $\mathbb{N}_0$ be the set of non-negative integers, and for all $n\in\mathbb N_0$, set $E_n(h)=(2n+1)h$.

\begin{theorem}[Existence of resonances near Landau levels]\label{thm:main}
Suppose that $\Curl\bA=1$ on a disk of radius $r_0\leq R_0$. For each  $n \in \mathbb{N}_0$ and $c\in(0,1/4)$, there exists $h_n > 0$ such that for all $h \in(0, h_n)$, the rectangle
\[
 [E_n(h) - h^{-2}e^{-cr_0^2/2h}, E_n(h) + h^{-2}e^{-cr_0^2/2h}] + i[-h^{-3}e^{-c r_0^2/h}, 0)
\]
contains at least one resonance of $P(h)$.
\end{theorem}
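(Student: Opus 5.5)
We prove the theorem with a quasimode argument combined with the black box ``resonances from quasimodes'' principle (Tang--Zworski, Stefanov), applied in the complex scaling framework of Section~\ref{sec:def res}.

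\textbf{Step 1: gauge reduction.} After translating, we may assume the disk where $\Curl\bA=1$ is $D(0,r_0)$. On this simply connected disk, $\bA-\bA_0$ with $\bA_0(x)=\tfrac12(-x_2,x_1)$ is curl free, so $\bA=\bA_0+\nabla\phi$ there. The unitary map $u\mapsto e^{i\phi/h}u$ then reduces $P(h)$ locally to the Landau Hamiltonian $(-ih\nabla-\bA_0)^2$.

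\textbf{Step 2: quasimodes.} In the symmetric gauge, the Landau level $E_n(h)=(2n+1)h$ has explicit eigenfunctions. One example is
\[
\psi_{n}(x)=h^{-1/2}L_n\bigl(|x|^2/2h\bigr)e^{-|x|^2/4h},
\]
with a Laguerre polynomial $L_n$ (the zero angular momentum state); any normalized eigenfunction with Gaussian decay $e^{-|x|^2/4h}$ would do. Let $\chi\in C_c^\infty(D(0,r_0))$ with $\chi=1$ on $D(0,\sqrt{4c}\,r_0/\sqrt{\cdot})$. Concretely, choose $\chi=1$ on $D(0,\rho r_0)$, where $\rho<1$ is fixed so that $\rho^2/4>c/2$; this is possible because $c<1/4$ gives $c/2<1/8<1/4$. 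Set $u_h=e^{i\phi/h}\chi\psi_n$. Then $u_h$ is supported inside the black box, where complex scaling does not act, and
\[
(P(h)-E_n(h))u_h=e^{i\phi/h}\bigl[-h^2(\Delta\chi)\psi_n-2h^2\nabla\chi\cdot\nabla\psi_n+2ih\,(\bA_0\cdot\nabla\chi)\psi_n\bigr].
\]
This error is supported where $|x|\ge\rho r_0$. There $|\psi_n|\lesssim h^{-1-n}e^{-\rho^2r_0^2/4h}$, so
\[
\|(P(h)-E_n(h))u_h\|\le C h^{-N}e^{-\rho^2 r_0^2/4h}\le e^{-c r_0^2/2h}
\]
for small $h$, while $\|u_h\|=1+O(e^{-1/Ch})$.

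\textbf{Step 3: from quasimodes to resonances.} We invoke the Tang--Zworski/Stefanov theorem in the form adapted to the black box setting:
\begin{itemize}
\item the black box hypotheses hold, since outside $B(0,R_0)$ the operator is the Aharonov--Bohm Laplacian, which is dilation analytic;
\item the polynomial bound on the number of resonances in $h$-size boxes holds;
\item an a priori resolvent bound $\|\chi(P_\theta-z)^{-1}\chi\|\le e^{Ch^{-2}}$ holds away from resonances.
\end{itemize}
The theorem states that a quasimode with real energy $E$ and error $\varepsilon(h)$ produces a resonance in $[E-\varepsilon h^{-M},E+\varepsilon h^{-M}]-i[0,\varepsilon^2h^{-M'}]$. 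With $\varepsilon=e^{-cr_0^2/2h}$ this gives exactly the rectangle in the statement, with the powers $h^{-2}$ and $h^{-3}$ after adjusting constants (possibly by decreasing $\rho$ slightly). Since resonances lie in the lower half-plane and $P(h)$ has no embedded eigenvalues near $E_n(h)$ (the continuous spectrum has no eigenfunctions because the operator is the free AB operator near infinity), the resonance has strictly negative imaginary part. This is the half-open interval $[\cdot,0)$.

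\textbf{Main obstacle.} The hard step is the hypotheses of Step 3, above all the a priori exponential resolvent bound and the counting bound for the scaled operator $P_\theta(h)$ with an Aharonov--Bohm tail. We would get the counting bound from the standard determinant argument: compare $P_\theta$ with a reference operator on a torus or large disk, where Weyl's law gives $O(h^{-2})$ eigenvalues, and bound $\log|\det|$ via trace class estimates for $\chi(P_\theta-z)^{-1}$. We would then get the resolvent bound from Cartan's lemma or the maximum principle applied to this determinant. The $1/|x|$ decay of $\bA_\alpha$ fits the long-range black box framework, because it is analytic in the dilation region. Once these are in place, the estimate follows from Step 2.
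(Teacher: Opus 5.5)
Your Steps~1--2 match the paper's construction: you gauge to $\tfrac12(-x_2,x_1)$, cut off the radial Landau eigenfunction inside the disk, and the commutator is supported where the Gaussian $e^{-|x|^2/4h}$ is small. The gap is in Step~3. The result you invoke says that a quasimode with error $\varepsilon$ yields a resonance in $[E-\varepsilon h^{-M},E+\varepsilon h^{-M}]-i[0,\varepsilon^2h^{-M'}]$. That is not the Tang--Zworski (or Stefanov) theorem. What is available is Theorem~\ref{thm:TZ}: with quasimode error $R(h)$ one must take $S(h)\gg R(h)$. The resonance is then located only in $\operatorname{Im} z\geq -h^{-3}S(h)$, and within $w(h)$ of $E$, where $w(h)^2\gg h^{-3}S(h)$.

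The proof of that theorem uses only the lower bound $\|\chi(P_\theta(h)-E)^{-1}\chi\|\geq 1/R(h)$, combined with a semiclassical maximum principle. Such a lower bound is already produced by a single resonance at distance of order $R(h)$ below $E$. So no quadratic gain in the width can come out of this method; that would need extra structure, as in Helffer--Sj\"ostrand's well-in-an-island analysis. If you feed in your error $\varepsilon=e^{-cr_0^2/2h}$, the correct theorem only gives a width $h^{-3}S(h)$ with $S(h)\gg e^{-cr_0^2/2h}$. That is far from the claimed $h^{-3}e^{-cr_0^2/h}$. A telltale sign is that your version would make the hypothesis $c<1/4$ superfluous, since any $c<1/2$ would do. Your sentence ``$c<1/4$ gives $c/2<1/8<1/4$'' never actually uses it.

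The repair uses only what you already have. Choose $\rho<1$ with $\rho^2/4>c$; this is exactly where $c<1/4$ is needed. Step~2 then gives $R(h)=O(h^{-N}e^{-\rho^2r_0^2/4h})=o(e^{-cr_0^2/h})$. Applying Theorem~\ref{thm:TZ} with $S(h)=e^{-cr_0^2/h}$ and $w(h)=h^{-2}\sqrt{S(h)}$ gives precisely the stated rectangle, with no adjustment of constants. Note that absorbing polynomial losses would require \emph{increasing} $\rho$, not decreasing it. This is the paper's bookkeeping, with $c<\alpha<(1-\delta)^2/4$.

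Also, the a priori resolvent bound and the resonance counting you single out as the main obstacle are internal to the Tang--Zworski theorem. They follow from the black box assumptions, notably the Weyl-type bound for the reference operator $P^{\#}$, which the paper verifies in Section~\ref{sec:def res}. You need not redo the determinant argument. Your exclusion of real resonances via absence of embedded eigenvalues (Rellich's lemma in the exterior plus unique continuation) is fine. It is in fact more careful than the paper's remark that $\sigma(P(h))=[0,\infty)$.
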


Generically, if for a fixed $h$ the resonances are simple (in the sense that the pole residue is a rank-one operator), we expect that their number will increase a $h\to 0$. In fact, along the proof of Theorem~\ref{thm:main}, we construct infinitely many linearly independent quasimodes with energy close to $E_n(h)$.
\medskip

The intuition behind Theorem~\ref{thm:main} is that for $\Curl\bA=1$ in a neighborhood of $0$, the operator $P(h)$ behaves locally like the Landau Hamiltonian, $P^{\rm L}(h)=(-ih\nabla -\bA^{\rm sym})^2$, where $\bA^{\rm sym}$ is the symmetric gauge: $\bA^{\rm sym}=\frac12(-x_2,x_1)$. The spectrum of this operator consists of the Landau levels $E_n(h)=\Lambda_n h$, which are degenerate eigenvalues with infinite multiplicity.

\subsubsection*{Anharmonic Landau levels}
We consider now a  magnetic field $B^{\rm ah}$ with an isolated zero, defined by $B^{\rm ah}(x)=|x|^\gamma$ with $\gamma> 0$.  
Note that  a corresponding vector potential is
\[\bA^{\rm ah}(x)=\frac{|x|^\gamma}{2+\gamma}(-x_2,x_1)\qquad (x=(x_1,x_2)\in\R^2).\]
Since $B^{\rm ah}$ grows at infinity, the spectrum of the magnetic Laplacian 
$P^{\rm ah}=(-i\nabla-\bA^{\rm ah})^2$ is purely discrete (see e.g. \cite[Corollary~1.2]{HM}) and is given by an increasing sequence of eigenvalues 
\[\Lambda_0^\gamma<\Lambda_1^\gamma<\Lambda_2^\gamma\dots.\]
We call $P^{\rm ah}$ the anharmonic Landau Hamiltonian, and we call $\Lambda_n^\gamma,$ $n\in\mathbb N_0$, the anharmonic Landau levels.
To justify this terminology, we express $P^{\rm ah}$ in polar coordinates $(r,\varphi)$,
\[P^{\rm ah}=-\partial_r^2-\frac{1}{r}\partial_r+\Bigl(-\frac{i}{r}\partial_\theta-\frac{r^{1+\gamma}}{2+\gamma}\Bigr)^2\]
and we note that the potential term in the radial variable gives rise to anharmonic potential $\propto r^{2+2\gamma}$ at $\infty$, in contrast to the Landau Hamiltonian case ($\gamma=0$). \medskip

We set
\[E_n^\gamma(h)=h^{1+\frac{\gamma}{2+\gamma}}\Lambda_n^\gamma.\]

\begin{theorem}[Resonances near anharmonic Landau levels] \label{thm:main-ah}
Let $\gamma>0$ and $p\in B(0,R_0)$. Suppose that $\Curl\bA(x)=B^{\rm ah}(x-p)$ on a neighborhood of $p$. For a given $n\in\mathbb N_0$, there exist positive constants $c,h_n$ such that
    the rectangle
\[
 [E_n^\gamma(h) - h^{-2}e^{-c/2h}, E_n^\gamma(h) +h^{-2}e^{-c/2h}] + i[-h^{-3}e^{-c/h}, 0)
\]
contains at least one resonance of $P(h)$, provided that $h\in(0,h_n]$.
\end{theorem}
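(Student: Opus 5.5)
The argument follows the same quasimode-to-resonance scheme as Theorem~\ref{thm:main}, with the Landau Hamiltonian replaced by the anharmonic model $P^{\rm ah}$ and the harmonic scaling replaced by the anharmonic one.

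\textbf{Gauge and scaling.} Translate so that $p=0$. On a disk $B(0,\rho)$ where $\Curl\bA=B^{\rm ah}$, the difference $\bA-\bA^{\rm ah}$ is curl free on a simply connected set, hence a gradient $\nabla\phi$. The gauge transform $u\mapsto e^{i\phi/h}u$ then lets us assume $\bA=\bA^{\rm ah}$ on $B(0,\rho)$. Next rescale $x=h^{\beta}y$ with $\beta=\frac{1}{2+\gamma}$. Since $\bA^{\rm ah}$ is homogeneous of degree $1+\gamma$,
\[
(-ih\nabla_x-\bA^{\rm ah}(x))^2=h^{2-2\beta}(-i\nabla_y-\bA^{\rm ah}(y))^2
= h^{1+\frac{\gamma}{2+\gamma}}P^{\rm ah},
\]
using $h^{1-\beta}=h^{\beta(1+\gamma)}$. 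Let $\psi_n$ be a normalized eigenfunction of $P^{\rm ah}$ for $\Lambda_n^\gamma$, and set $u_h(x)=\chi(x)h^{-\beta}\psi_n(h^{-\beta}x)$, where $\chi$ is a cutoff equal to $1$ near $0$ and supported in $B(0,\rho)$.

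\textbf{Agmon decay and the quasimode bound.} The key input is exponential decay of $\psi_n$. The effective potential $\bigl(\frac{m}{r}-\frac{r^{1+\gamma}}{2+\gamma}\bigr)^2$ grows like $r^{2+2\gamma}$ in each angular mode, and more robustly the operator satisfies $P^{\rm ah}\ge |x|^\gamma$ up to lower-order errors (via $\|(\Pi_1\pm i\Pi_2)u\|^2\ge0$). Agmon estimates therefore give $e^{c|y|^{1+\gamma/2}}\psi_n\in L^2$, and in fact any weight $e^{\Phi(y)}$ with $\Phi$ of that order, including for the magnetic derivatives. On $\operatorname{supp}\nabla\chi$ we have $|y|\sim h^{-\beta}$, so
\[
|y|^{1+\gamma/2}\sim h^{-\beta(1+\gamma/2)}=h^{-1/2}.
\]
This gives only $e^{-c h^{-1/2}}$, which is too weak for the stated $e^{-c/h}$. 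The fix is to apply Agmon directly in the $x$ variable. The semiclassical operator has ``potential'' $h|x|^\gamma$, and the natural distance is $\int\sqrt{|x|^\gamma}$ weighted by $h^{-1/2}$; this is again only $h^{-1/2}$. So I would instead use the sharper magnetic decay: in each angular sector $m$ the radial potential is $\bigl(\frac{hm}{r}-\frac{r^{1+\gamma}}{2+\gamma}\bigr)^2/$ in $x$-units, which is of size $O(1)$ away from $0$, since the low-lying states have $hm$ small. Agmon then gives decay $e^{-d(x)/h}$, where $d(x)\approx\int_0^{|x|}\frac{r^{1+\gamma}}{2+\gamma}dr$. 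This yields $\|(P(h)-E_n^\gamma(h))u_h\|\le Ce^{-c/h}$ and $\|u_h\|\to1$. Making this decay rigorous is the main obstacle. I would first try a decomposition into angular Fourier modes, noting that $\bA^{\rm ah}$ is radial-angular, combined with a one-dimensional Agmon estimate for each mode that is uniform in $m$. Modes with large $hm$ are either orthogonal to $\psi_n$ or have energy far above $E_n^\gamma$.

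\textbf{From quasimode to resonance.} Finally, invoke the quasimode-to-resonance result already used for Theorem~\ref{thm:main}, in the black-box/complex scaling framework of Section~\ref{sec:def res}; this is a Tang--Zworski / Stefanov type argument. Its hypotheses are a compactly supported quasimode of size $e^{-c/h}$, inside the region where the complex scaling acts trivially, at real energy $E_n^\gamma(h)\in[0,1]$ for small $h$, together with the polynomial resolvent bound for the scaled operator away from resonances. It produces a resonance in
\[
[E_n^\gamma-h^{-2}e^{-c/2h},\,E_n^\gamma+h^{-2}e^{-c/2h}]+i[-h^{-3}e^{-c/h},0),
\]
after renaming $c$. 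The losses $h^{-2}$ and $h^{-3}$ come from the polynomial number of resonances and the a priori resolvent growth, exactly as in the constant-field case. One must check that this machinery accepts energies of size $h^{1+\gamma/(2+\gamma)}\to0$. Since that is also true for $E_n(h)=(2n+1)h$, the same argument applies.
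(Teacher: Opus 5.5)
Your proposal is correct and follows essentially the paper's route. You gauge to $\bA^{\rm ah}$, rescale a fixed normalized eigenfunction of $P^{\rm ah}$ by $h^{-1/(2+\gamma)}$, and cut it off. You then get $e^{-c/h}$ errors from the Agmon decay $e^{-c_0|y|^{2+\gamma}}$ (equivalently your $e^{-d(x)/h}$), which comes from the radial potential $\bigl(\frac{m}{r}-\frac{r^{1+\gamma}}{2+\gamma}\bigr)^2$ rather than from the weaker bound $P^{\rm ah}\ge|x|^\gamma$, and you conclude with Theorem~\ref{thm:TZ}. The uniformity in $m$ that you flag as the main obstacle is not needed: as in Proposition~\ref{label:dec-ef-ah}, one takes $\psi_n=e^{im\varphi}f(r)$ in a single fixed angular mode, so one one-dimensional Agmon estimate with weight $c_0r^{2+\gamma}$ suffices.
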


Formally, for $\gamma=0$, the result in Theorem~\ref{thm:main-ah} is consistent with that of Theorem~\ref{thm:main}.

\subsubsection*{Local magnetic wells}

We turn now to the existence of  resonances created by a magnetic well. Suppose that $B^{\rm well}$ is a smooth  positive function on $\R^2$ having a unique non-degenerate minimum at a point $p_0\in B(0,R_0)$, that is
\begin{equation}\label{eq:m-well}
 B^{\rm well}(p_0)=b_0:=\inf_{x\in\R^2}B^{\rm well}(x)>0,\quad H=\frac12\mathrm{Hess}\,B^{\rm well}\big|_{p_0}\mbox{ is positive-definite}.   
\end{equation}
For all $n\in\mathbb N_0,$ let
\[e_n(h)=b_0h+\Bigl(2n\frac{\sqrt{\mathrm{det}\,H}}{b_0}+\frac{(\mathrm{Tr}\,H^{\frac12})^2}{b_0}\Bigr)h^2.\]

\begin{theorem}[Resonances near a magnetic well]\label{thm:main**}
Suppose that $\Curl\bA=B^{\rm well}$ on a neighborhood of $p_0$. Let $\alpha\in(0,1/2)$. For every $n\in\mathbb N_0$, there exist positive constants $L_n,h_n$ such that
    the rectangle
\[
 [e_n(h) - L_nh^{3}, e_n(h) + L_nh^{3}] + i[-h^{-3}e^{-1/h^{\alpha}}, 0)
\]
contains at least one resonance of $P(h)$, provided that $h\in(0,h_n]$.
\end{theorem}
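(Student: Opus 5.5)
The plan is the quasimode-to-resonance strategy used for Theorem~\ref{thm:main}, with the Landau-level quasimodes replaced by the known semiclassical quasimodes of a magnetic well. The tool is the resonance criterion in the black-box framework, in the spirit of Tang--Zworski / Stefanov. Suppose $u\in H^2$ is supported in a fixed compact set inside the region where $\Curl\bA=B^{\rm well}$, is $L^2$-normalized, and satisfies $\|(P(h)-E)u\|\le \varepsilon(h)$ with $E\approx b_0h$ real. Then there is a resonance of $P(h)$ in a region of the form $[E-C\varepsilon h^{-N},E+C\varepsilon h^{-N}]-i[0,C\varepsilon^2 h^{-M}]$; in the setting of Theorem~\ref{thm:main} this produced the widths $h^{-2}e^{-c/2h}$ and $h^{-3}e^{-c/h}$. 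So it suffices to build a compactly supported quasimode at energy $e_n(h)$ with two properties: an $O(h^3)$ error in the real direction, and an exponentially small error $e^{-1/(2h^\alpha)}$ controlling the imaginary part. The two errors play different roles, so I will use the criterion in its refined form. The real part is located via the quasimode's distance to the spectrum of a local self-adjoint reference operator. The imaginary part is bounded by the exponential error.

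\textbf{Local reference problem.} First localize. Take a small disk $D=B(p_0,\delta)$ on which $\Curl\bA=B^{\rm well}$, and let $P_D(h)$ be the Dirichlet realization there. By the Helffer--Morame, Raymond--Vũ Ngọc and Helffer--Kordyukov analysis of nondegenerate magnetic wells, the low-lying eigenvalues of $P_D(h)$ satisfy $\lambda_n(h)=e_n(h)+O(h^3)$ (in fact $O(h^{5/2})$ is the worst case, but one gets $h^3$ from parity). This is the Birkhoff normal form: $b_0h$ is the cyclotron energy, $2n\sqrt{\det H}/b_0\,h^2$ is the harmonic oscillator of the guiding-center motion, and $(\mathrm{Tr}H^{1/2})^2/b_0\,h^2$ is the zero-point correction. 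I would quote these results rather than redo the normal form. Let $v_n$ be the normalized eigenfunction for $\lambda_n(h)$.

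\textbf{Agmon decay and cut-off.} Next, prove Agmon estimates for $v_n$. Because $B^{\rm well}(x)-b_0\gtrsim |x-p_0|^2$, the effective potential in the lowest Landau band is $h(B-b_0)$ at energies $b_0h+O(h^2)$. The eigenfunction is localized at scale $h^{1/4}$ around $p_0$, with decay $\exp(-c|x-p_0|^2/h^{1/2})$ in the well direction. Cruder Agmon bounds $e^{-\Phi/h^{1/2}}$, with $\Phi$ an Agmon distance growing quadratically, already give $\|e^{\Phi/h^{1/2}} v_n\|\le C$. Then set $u=\chi v_n$ with $\chi=1$ on $B(p_0,\delta/2)$, supported in $D$. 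The commutator $[P(h),\chi]v_n$ is supported where $\Phi\ge c\delta^2$, so
\[
\|(P(h)-\lambda_n(h))u\|\le Ce^{-c\delta^2/h^{1/2}}\le e^{-1/h^{\alpha}}
\]
for $\alpha<1/2$ and small $h$. This explains the restriction $\alpha\in(0,1/2)$.

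\textbf{Main obstacle.} Feeding this quasimode into the resonance-existence criterion yields a resonance $z$ with $|\mathrm{Re}\,z-\lambda_n(h)|\le h^{-2}e^{-1/(2h^\alpha)}\ll h^3$ and $-h^{-3}e^{-1/h^\alpha}\le \mathrm{Im}\,z<0$. The first bound combines with $\lambda_n=e_n+O(h^3)$ to give the window $[e_n-L_nh^3,e_n+L_nh^3]$. The main difficulty is the step borrowed from the magnetic well literature, namely the $O(h^3)$ eigenvalue asymptotics together with simplicity or separation of $\lambda_n$. The criterion needs the cluster near $\lambda_n$ to be isolated at a polynomial scale, here $\sim h^2$ gaps between consecutive $n$, so that the polynomial losses $h^{-N}$ are harmless. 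Having a Dirichlet problem on $D$ rather than on $\R^2$ changes the eigenvalues only by exponentially small amounts, again by Agmon decay.
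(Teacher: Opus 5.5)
Your proposal is correct and is essentially the paper's proof: cut-off eigenfunctions of a local magnetic-well operator serve as quasimodes with an exponentially small error, Theorem~\ref{thm:TZ} gives a resonance within $h^{-2}e^{-1/(2h^{\alpha})}$ of the reference eigenvalue with $\operatorname{Im} z\geq -h^{-3}e^{-1/h^{\alpha}}$, and the $O(h^3)$ asymptotics of \cite{HK,RV} recenter the window at $e_n(h)$. The differences are minor---you use the Dirichlet problem on $B(p_0,\delta)$ instead of the whole-plane operator $P^{\rm well}_h$, and you prove the decay yourself with an Agmon weight $\epsilon|x-p_0|^2/h^{1/2}$ (via $\|(-ih\nabla-\bA)u\|^2\geq h\int B|u|^2$) instead of citing \cite[Proposition~4.1]{RV}---and the second actually helps, because your error $Ce^{-c\delta^2/h^{1/2}}$ is strictly $\ll S(h)=e^{-1/h^{\alpha}}$ as the criterion requires, whereas the paper's argument needs the decay estimate with some $\alpha'\in(\alpha,1/2)$ to get this.
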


Theorem~\ref{thm:main**} is local in the sense that only near $p_0$ the field equals $B^{\rm well}$. If the global field has several disjoint non-degenerate positive wells $p_j$, each individually produces resonances as in the theorem. When wells are well separated and symmetric (e.g. double magnetic wells), one expects exponential splitting $e^{-S/h}$ (with $S>0$ realted to some Agmon distance) due to tunneling (see \cite{EM, FGMR, FMR}). Proving this is delicate because the resonance window width $L_n h^3$ is polynomially large compared to the exponentially small splitting. Due to the competition between the resonance window and the exponential splitting,  a refined analysis is needed to resolve distinct resonances from different wells.

\subsubsection*{Locally step magnetic fields}

Our next result establishes the existence of resonances when the magnetic field is piecewise constant and sign-changing, with a jump discontinuity along a curve with non-constant curvature. One motivation for considering such a field comes from \cite{RP, RPM}. Before stating the result, we describe our geometric configuration (see Figure~\ref{fig:geometry_config}). Suppose that $\Gamma$ is
a simple smooth curve that splits $\R^2$ into two disjoint unbounded open sets, $P_1$ and $P_2$, and such
that i) $\Gamma$ is a semistraight line at infinity, and ii) $\Gamma$  intersects the circle $\partial B(0,R_0)$ transversely at two distinct points. Consider a point $x_0\in \Gamma\cap B(0,R_0)$ and a local arclength parametrization $s\in[-\epsilon_0,\epsilon_0]\mapsto M(s)$ of $\Gamma\cap \overline{(B(0,R_0)}$ such that $M(0)=x_0$. For a given $s\in[-\epsilon_0,\epsilon_0],$ consider a direct frame $(\mathbf t(s),\mathbf n(s))$ at $M(s)$ such that
\begin{itemize}
    \item $\mathbf n(s)$ is the unit normal of $\Gamma$ at the point $M(s)$ pointing towards $P_1$,
    \item $\mathbf t(s)$ is the unit tangent vector of $\Gamma$ at the point $M(s)$
\end{itemize}
We define the curvature $k$ of $\Gamma$ as $\mathbf t'(s) = k(s)\mathbf n(s)$ and we suppose that 
\[k(0)=k_0:=\max_{s\in[-\epsilon_0,\epsilon_0]}k(s)\in\R,\quad k''(0)=k_2<0,\quad k(s)<k(0) \mbox{ for }0<|s|\leq \epsilon_0,\]
hence the curvature of $\Gamma_0:=\Gamma\cap \overline{B(0,R_0)}$ has a non-degenerate and unique maximum at $x_0$.  Finally, we define the step function
\begin{equation}\label{eq:m-step}
B^{\rm step}_a(x)=\begin{cases}
    1&\mbox{if }x\in P_1,\\
    a\in(-1,0)&\mbox{if }x\in P_2.
\end{cases}
\end{equation}

We also need the spectral constants from \cite[(1.8)--(1.12)]{AHK}:
\begin{equation}\label{eq:spect-cst}
\beta_a=\mu_a(\zeta_a),\quad C_1(a)=\frac{1}{3}\left(1-\frac{1}{a}\right)\zeta_a\phi_a(0)\phi_a'(0) >0,\quad C_2(a)=\frac12\sqrt{\mu_a''(\zeta_a)C_1(a)}>0,
\end{equation}
where $\mu_a(\xi)$ is the lowest eigevalue of the Schr\"odinger operator 
\[h_a[\xi]=-\frac{d^2}{d\tau^2} + (\xi + b_a(\tau)\tau)^2 \quad \text{on } L^2(\mathbb{R}), \quad  b_a(\tau) = \mathbf{1}_{\mathbb{R}_+}(\tau) + a\,\mathbf{1}_{\mathbb{R}_-}(\tau),\]
$\zeta_a<0$ is the unique minimum of $\mu_a(\xi)$, $\xi\in\R$, and \(\phi_a\) is the positive eigenfunction of \(h_a[\zeta_a]\) of unit \(L^2\)-norm corresponding to \(\beta_a\).

\begin{theorem}[Curvature induced magnetic resonances]\label{thm:main*}
    Suppose that $\Curl\bA=B^{\rm step}_a$ on a neighborhood of $x_0$. Let the constants $\beta_a,C_1(a),C_2(a)$ be as in \eqref{eq:spect-cst}. There exists a positive constant $c$, 
    and for every  $n\in\mathbb N_0$, there exist positive constants $L_n,h_n$ such that, upon setting
    \[\lambda_n(h)=\beta_ah-k_0C_1(a)h^{3/2}+(2n+1)\sqrt{|k_2|}\,C_2(a)h^{7/4},\]
    the rectangle
\[
 [\lambda_n(h) - L_nh^{2}, \lambda_n(h) + L_nh^{2}] + i[-h^{-3}e^{-c/h^{1/8}}, 0)
\]
contains at least one resonance of $P(h)$, provided that $h\in(0,h_n]$.
\end{theorem}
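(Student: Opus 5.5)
The plan is the same quasimode-to-resonance scheme as for Theorems~\ref{thm:main}, \ref{thm:main-ah} and~\ref{thm:main**}. We build a quasimode localized near $x_0$, show it is exponentially small outside a shrinking neighborhood of $x_0$, and then transfer it to the complex-scaled operator of Definition~\ref{def:resonances}, where a resolvent estimate forces a resonance in the stated rectangle.

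\textbf{Step 1: the quasimode.} We work in tubular (Frenet) coordinates $(s,t)$ near $\Gamma$, with $t$ the signed distance to $\Gamma$ (positive in $P_1$). We pick a gauge in which $\bA=(A_1,0)$ with $A_1(s,t)=-\bigl(b_a(t)t-\tfrac{k(s)}2 b_a(t)t^2\bigr)$. The operator then becomes $(1-tk)^{-1}(ih\partial_s+\dots)(1-tk)^{-1}(\dots)-h^2(1-tk)^{-1}\partial_t(1-tk)\partial_t$. We rescale $t=h^{1/2}\tau$, $s=h^{1/8}\sigma$, and conjugate by the phase $e^{i\zeta_a s/h^{1/2}}$. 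Expanding in powers of $h^{1/8}$ reproduces the construction of \cite{AHK}:
\begin{itemize}
\item the leading operator $h\,h_a[\zeta_a]$ gives $\beta_a h$;
\item the curvature correction, projected onto $\phi_a$, gives $-k(s)C_1(a)h^{3/2}$;
\item the $\xi$-dependence $\tfrac12\mu_a''(\zeta_a)(hD_s)^2$ together with $-\tfrac{k_2}{2}s^2C_1(a)$ gives an effective harmonic oscillator in $\sigma$ with eigenvalues $(2n+1)\sqrt{|k_2|}C_2(a)h^{7/4}$.
\end{itemize}
Solving the transport equations to sufficiently high order, and multiplying by a cutoff $\chi$ supported in a fixed small neighborhood of $x_0$ where $\Curl\bA=B^{\rm step}_a$, we get $u_h$ with $\|u_h\|=1$ and $\|(P(h)-\lambda_n(h))u_h\|\le L_nh^2$. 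Since the profile is Gaussian in $\sigma$ and $\phi_a$ decays like a Gaussian in $\tau$, the function $u_h$ and its derivatives are $O(e^{-c/h^{1/8}})$ on $\operatorname{supp}\nabla\chi$. The weakest decay is the one in $s$, $e^{-c s^2/h^{1/4}}$, and this is where the rate $h^{1/8}$ in the imaginary window originates.

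\textbf{Step 2: transfer to resonances.} The quasimode is supported in $B(0,R_0)$, where the complex-scaled operator $P_\theta(h)$ coincides with $P(h)$. Hence $u_h$ is also a quasimode for $P_\theta(h)$. We then invoke the abstract lemma already used for Theorem~\ref{thm:main}, a Stefanov--Tang/Tang--Zworski type argument: for a black-box operator with polynomial resolvent bounds away from resonances, the semiclassical maximum principle yields a resonance within $O(h^2)$ in real part of $\lambda_n(h)$ and with imaginary part in $[-h^{-3}e^{-c/h^{1/8}},0)$. The imaginary-part bound should follow from $\operatorname{Im}z\,\|u\|^2=\operatorname{Im}\langle (P_\theta-z)u,u\rangle$ restricted to the cutoff region. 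The $h^{-3}$ loss comes from the polynomial a priori resolvent estimate with $\theta$ fixed, and the exponent is inherited from the tail of $u_h$.

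\textbf{Main obstacle.} The delicate part is the rigorous error estimate in Step 1, because $B$ is discontinuous: the transverse operator $h_a[\xi]$ has a coefficient jump at $\tau=0$, so the higher-order correctors are only piecewise smooth. We must check that they satisfy the transmission conditions needed for $u_h\in H^2=\mathcal D$ from \eqref{eq:Domain-H}. The cleanest route is to take the correctors from \cite{AHK} and verify their exponential decay in $\tau$ directly from the ODE on each half-line. The anisotropic scaling $h^{1/8}$ in $s$ must be tracked carefully so that all remainder terms are genuinely $O(h^2)$ and do not merely sit at the level of the $h^{7/4}$ term.
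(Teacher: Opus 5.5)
There is a genuine gap between your Step~1 and Step~2. Your quasimode has residual $\|(P(h)-\lambda_n(h))u_h\|\le L_nh^2$. A residual of that size cannot force a resonance into the window $i[-h^{-3}e^{-c/h^{1/8}},0)$.

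In Theorem~\ref{thm:TZ} the depth $h^{-3}S(h)$ of the window is tied to the residual through $R(h)\ll S(h)$, and $R(h)=\mathcal O(h^\infty)$ is required. So $\operatorname{Im}z$ is governed by the size of $\|(P-\mathcal E)u\|$, not by the size of the tail of $u_h$ on $\operatorname{supp}\nabla\chi$. With $R(h)\sim h^2$ the theorem does not apply at all. Reaching $e^{-c/h^{1/8}}$ would require $R(h)\ll e^{-c/h^{1/8}}$, which no finite-order (or merely $\mathcal O(h^\infty)$) WKB construction gives.

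Your heuristic $\operatorname{Im}z\,\|u\|^2=\operatorname{Im}\langle (P_\theta-z)u,u\rangle$ does not help. For the quasimode it says nothing about where a resonance lies. For the actual resonant state it involves the complex-scaled exterior region, about which nothing is known a priori. Note also that an exponentially small residual is impossible with the explicit energy $\lambda_n(h)$, since $\lambda_n(h)$ is only $\mathcal O(h^2)$-close to the true spectral value.

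The missing idea is to decouple the two error scales. The paper takes genuine eigenfunctions $\psi_n$ of the Dirichlet step-field operator $P^{\rm step}_{\Omega,h}$ on $\Omega=B(0,R_0)$ from \cite{AHK}, with eigenvalues $\tilde\lambda_n(h)$. It then uses the Agmon estimate \eqref{eq:dec-ef} with weight $\exp(2\alpha h^{-1/8}|x-x_0|)$. For $u_n=c_n\chi\psi_ne^{i\phi/h}$ only the commutator $[P^{\rm step}_{\Omega,h},\chi]\psi_n$ survives. It lives where that weight is exponentially large, so $\|(P(h)-\tilde\lambda_n(h))u_n\|=\mathcal O(e^{-2\alpha_1/h^{1/8}})$.

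Theorem~\ref{thm:TZ} with $R=e^{-\alpha_0/h^{1/8}}$, $S=e^{-c/h^{1/8}}$, $w=h^{-2}e^{-c/2h^{1/8}}$ then gives a resonance exponentially close to $\tilde\lambda_n(h)$. The $L_nh^2$ real window enters only at the very end, through $\tilde\lambda_n(h)=\lambda_n(h)+\mathcal O(h^2)$ from \eqref{eq:asymp-ln}.

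This route also removes your ``main obstacle''. A true eigenfunction lies in the operator domain by elliptic regularity, so no transmission conditions for correctors need checking. The rate $h^{-1/8}$ comes from this linear Agmon weight on the tangential localization scale. It does not come from a Gaussian tail in $s$, which would give $e^{-c/h^{1/4}}$.
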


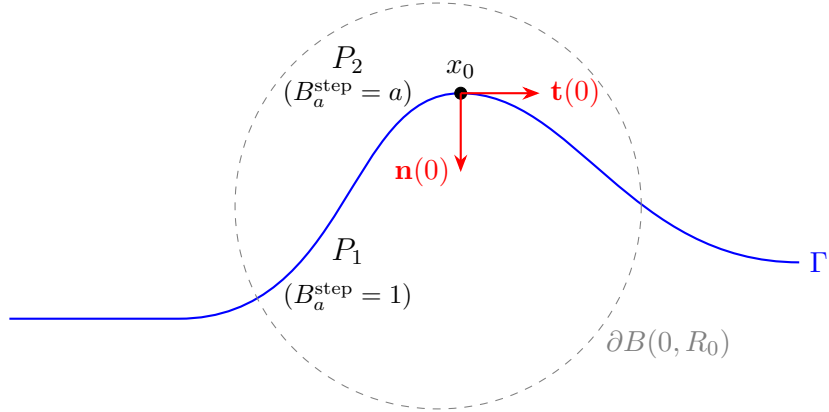
\begin{figure}[htbp]
    \centering
    \begin{tikzpicture}[scale=1.5]
        \draw[thick, blue] 
            (-3, -0.5) -- (-1.5, -0.5) 
            .. controls (0, -0.5) and (0, 1.5) .. (1, 1.5) 
            .. controls (2, 1.5) and (2.5, 0) .. (4, 0)
            node[right] {$\Gamma$};
        
        \coordinate (X0) at (1, 1.5);
        \filldraw (X0) circle (1.5pt) node[above=2pt] {$x_0$};

        \draw[-Stealth, thick, red] (X0) -- ++(0.7, 0) node[right] {$\mathbf{t}(0)$};
        \draw[-Stealth, thick, red] (X0) -- ++(0, -0.7) node[left] {$\mathbf{n}(0)$};

        \node[font=\large] at (0, 0.1) {$P_1$};
        \node[font=\small] at (0, -0.3) {($B_a^{\rm step}=1$)};
        
        \node[font=\large] at (0, 1.8) {$P_2$};
        \node[font=\small] at (0, 1.5) {($B_a^{\rm step}=a$)};

        \draw[dashed, gray] (0.8, 0.5) circle (1.8);
        \node[gray, below right] at (2.2, -0.5) {$\partial B(0, R_0)$};

    \end{tikzpicture}
    \caption{Illustration of the curve $\Gamma$ splitting $\mathbb{R}^2$ into regions $P_1$ and $P_2$ with jump discontinuity in the magnetic field.}
    \label{fig:geometry_config}
\end{figure}

We focus on the case when the field is sign changing leading to a strong localization along the edge; a classical particle in this situation follows a snake-shaped orbit \cite{RPM}. Recall that a step which is not sign changing, $a > 0$ in (1.6), also exhibits edge states propagating along the interface (as well as other Iwatsuka-type field variations, expected to be valid generally \cite[Section~6.5]{CFKS87}), however, in the Ginsburg-Landau counterpart of the current problem it gives rise to a ‘non-trapping magnetic step’ (see \cite[Remark 1.5]{AKS}). The semiclassical behavior of the low-lying eigenvalues can be then determined to leading order in light of \cite[Section~3.2]{AKS} (see also \cite[Proposition 4.2]{A}). However, the splitting of the eigenvalues remains open, along with the concentration/decay of the corresponding eigenfunctions. This is behind the technical issue that prevented us from transplanting these eigenvalues to resonances under such a local field.

\subsubsection*{Zero-field island} 
Our final result concerns the situation when the magnetic field vanishes in a subset of $B(0,R_0)$ --\emph{a zero-field island}. Let us suppose that $\omega\subset B(0,R_0)$ is an open set with smooth boundary which is relatively compact in $B(0,R_0)$.  

We denote by 
\[\ell_0<\ell_1<\ell_2<\dots\]
the eigenvalues of the Dirichlet Laplacian in $\omega$.

We will prove that emergence of semi-classical resonances  near the eigenvalues $\ell_n$ when the magnetic field vanishes in $\omega$ while being positive in a neighborhood $\omega$. For the sake of simplicity, we will focus on a simple example of such a magnetic field: Suppose that $B^{\rm isl}:\mathbb \R^2\to\mathbb R$ satisfies $B^{\rm isl}(x)=0$ in $\omega$ and $B^{\rm isl}(x)=1$ in $\R^2\setminus\overline{\omega}$. Our result is:

\begin{theorem}[Resonances in a zero-field island]\label{thm:zero island}
    Suppose that $\Curl \bA(x)=B^{\rm isl}(x)$ on a relatively compact subset $\tilde\omega\subset B(0,R_0)$ that contains $\omega$. For every $n\in\mathbb N_0$, there exist positive constants $c,h_n$ and a function $\varepsilon_n(h)=o(1)$ such that the rectangle
\[
 [\ell_nh^2 - h^2\varepsilon_n(h) , \ell_nh^2 + h^2\varepsilon_n(h)] + i[-h^{-3}e^{-c/h^{1/2}}, 0)
\]
contains at least one resonance of $P(h)$, provided that $h\in(0,h_n]$.
\end{theorem}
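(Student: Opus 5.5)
The plan is to follow the same "quasimode to resonance" scheme as for Theorems~\ref{thm:main}--\ref{thm:main*}. The output is a resonance whose imaginary part is exponentially small and whose real part is localized within $h^2\varepsilon_n(h)$ of $\ell_nh^2$. We work throughout with the complex-scaled operator $P_\theta(h)$ of Definition~\ref{def:resonances}; the scaling acts only outside $B(0,R_0)$, so it coincides with $P(h)$ on $\tilde\omega$. The general principle, presumably established in Section~\ref{sec:def res} in the black-box spirit of Tang--Zworski/Stefanov, is as follows. If $u$ is supported in a fixed compact set inside $B(0,R_0)$, with $\|u\|=1$ and $\|(P(h)-E)u\|\le \delta(h)$, and if resonances in a fixed $h$-dependent window are polynomially controlled in number, then there is a resonance $z$ with $|\mathrm{Re}\,z-E|\lesssim \delta$-dependent width and $-\mathrm{Im}\,z\lesssim h^{-3}\delta^2$-type size. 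The $h^{-2}$, $h^{-3}$ prefactors in the other theorems indicate exactly this mechanism, with a real window of order $h^{-2}\delta$ and an imaginary window of order $h^{-3}\delta^2$. So it suffices to build a quasimode with $\delta(h)=O(e^{-c/h^{1/2}})$ whose energy is within $o(h^2)$ of $\ell_nh^2$.

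\textbf{Step 1: gauge and model operator.} Since $\Curl\bA=0$ on $\omega$ and $\omega$ may not be simply connected, first reduce to the case where $\omega$ is simply connected (or assume it). Then there is a gauge change $e^{i\phi/h}$ with $\bA=\nabla\phi$ on a neighborhood of $\overline\omega$ intersected with the zero-field region. On $\omega$, $P(h)$ is then unitarily equivalent to $-h^2\Delta$. Consider the model operator $P_{\tilde\omega}(h)$, the Dirichlet realization of $(-ih\nabla-\bA)^2$ on $\tilde\omega$. Its low spectrum near $h^2\ell_n$ is described as follows. Outside $\omega$ the field equals $1$, so the magnetic energy there is at least of order $h\gg h^2$. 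Consequently eigenfunctions with energy $O(h^2)$ are exponentially small in $\tilde\omega\setminus\omega$, decaying with the Agmon weight $e^{-c\,\mathrm{dist}(x,\omega)/h^{1/2}}$, because the effective potential is $\sim h$ and $\sqrt{h}/h=h^{-1/2}$. From this, compare with the Dirichlet problem on $\omega$: a Dirichlet eigenfunction of $-h^2\Delta$ on $\omega$, extended by zero, is a test function, so the upper bound $\lambda_n(P_{\tilde\omega})\le h^2\ell_n$ is immediate from the min-max principle. For the lower bound, use the decay together with an IMS partition at scale $h^{1/2}\rho$ around $\partial\omega$, or a Hardy/boundary argument showing that the magnetic field forces functions to be small near $\partial\omega$. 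This gives $\lambda_n\ge h^2(\ell_n-\varepsilon_n(h))$. This is where the loose $o(1)$ error comes from: the penetration into the field region is of order $h^{1/2}$, which perturbs the effective domain $\omega$ by that amount, and Dirichlet eigenvalues are continuous under domain perturbation.

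\textbf{Step 2: quasimode and transfer.} Take the eigenfunction $v_n$ of $P_{\tilde\omega}(h)$ for $\lambda_n$ and set $u=\chi v_n$, with a cutoff $\chi$ equal to $1$ near $\overline\omega$ and supported in $\tilde\omega$. The commutator $[P,\chi]v_n$ lives where $\mathrm{dist}(x,\omega)\ge d>0$, so the Agmon estimate gives $\|(P(h)-\lambda_n)u\|\le Ce^{-c/h^{1/2}}$. Feeding this into the quasimode-to-resonance result, with the polynomial resonance counting bound used in the earlier theorems, yields a resonance in the stated rectangle. The distinctness of the $\ell_n$ and the simplicity issue are not needed, since we only claim existence.

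The main obstacle is the Agmon decay estimate with the sharp $h^{1/2}$ rate for a discontinuous field. The key ingredient I would try first is the standard lower bound
\[\int |(-ih\nabla-\bA)w|^2\ge h\int B|w|^2\]
for $w$ supported where $B=1$, applied to $e^{\Phi}v_n$ with $\Phi=c\,\mathrm{dist}(x,\omega)/h^{1/2}$. This requires localizing away from $\partial\omega$, where the bound fails. Since $|\nabla\Phi|^2h^2\sim h$, the constant $c$ must be chosen small relative to this lower bound, and the boundary layer must be handled by the IMS formula.
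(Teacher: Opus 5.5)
Your route is correct, but it uses a different model operator from the paper. The paper takes the \emph{Neumann} realization $\mathcal L(b)$ on a smooth intermediate domain $\hat\omega$, built from an auxiliary global potential $\bA^{\rm isl}$. It gets Agmon decay from the de~Gennes bound $\Theta_0 b>b/2$ on $\hat\omega\setminus\omega$, quotes \cite{GKS} for $\hat\ell_n(b)\to\ell_n$, and gauges back to $\bA$. You take the \emph{Dirichlet} realization of $(-ih\nabla-\bA)^2$ on $\tilde\omega$ with the original $\bA$. Then $\chi v_n$ is directly a quasimode of $P(h)$, no smooth $\hat\omega$ or auxiliary gauge is needed, and $\lambda_n\le h^2\ell_n$ follows at once from min-max.

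That choice also removes what you call the main obstacle. Write $\Pi=-ih\nabla-\bA$. For every $w\in H^1_0(\tilde\omega)$,
\[
\|\Pi w\|^2=\|(\Pi_1+i\Pi_2)w\|^2+h\int B|w|^2,
\]
so $\|\Pi w\|^2\ge h\int_{\tilde\omega\setminus\omega}|w|^2$, because $B^{\rm isl}\ge 0$ everywhere. The bound needs only compact support in $\tilde\omega$, not support in $\{B=1\}$, so no localization away from $\partial\omega$ and no IMS are required. Take $w=e^{\Phi}v_n$ with $\Phi=c\,t_\omega/h^{1/2}$ and $0<c<1$. The Agmon identity gives $\bigl(h(1-c^2)-\lambda_n\bigr)\|e^{\Phi}v_n\|^2_{L^2(\tilde\omega\setminus\omega)}\le\lambda_n\le h^2\ell_n$. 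This yields the $e^{-c\,d/h^{1/2}}$ decay of $v_n$ and $\Pi v_n$ using only the trivial upper bound. So your approach is more elementary on the decay side, while the paper's buys a citable spectral convergence result.

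\textbf{Points to tighten.} The price of your route is that you must prove the lower bound $\lambda_n\ge h^2(\ell_n-o(1))$ yourself, and the IMS/domain-continuity heuristic does not give it beyond the ground state. Diamagnetism controls only the bottom of the spectrum. A clean argument runs as follows.
\begin{enumerate}[i)]
\item Gauge $\bA$ away on $\omega$.
\item The first $k+1$ eigenfunctions are then bounded in $H^1(\omega)$ and carry $L^2$-mass $O(h)$ outside $\omega$, by the bound above.
\item By the trace inequality on an exterior collar together with $|\nabla|v||\le h^{-1}|\Pi v|$, their traces on $\partial\omega$ are $O(h^{1/4})$.
\item Their weak limits therefore lie in $H^1_0(\omega)$ and are $L^2$-orthonormal.
\item Lower semicontinuity plus min-max give $\liminf h^{-2}\lambda_k\ge\mu_k$, where $\mu_k$ is the $k$-th Dirichlet eigenvalue counted with multiplicity. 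Index by multiplicity, not by the distinct $\ell_n$.
\end{enumerate}

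Two smaller points. First, triviality of the circulation of $\bA$ around holes of $\omega$ (e.g.\ $\omega$ simply connected) is a genuine hypothesis, not something you can reduce to. Otherwise the limit problem on $\omega$ carries Aharonov--Bohm fluxes of order $h^{-1}$ and $h^{-2}\lambda_n$ need not converge. The paper's appeal to \cite{GKS} tacitly uses the same hypothesis. Second, your reading of Theorem~\ref{thm:TZ} is off by a square. One needs $R(h)\ll S(h)$; the imaginary window is $h^{-3}S(h)$ and the real one $h^{-2}\sqrt{S(h)}$. So the $c$ in the statement must be strictly smaller than the exponent of your quasimode error, not twice it. This is harmless here, since $c$ is unspecified.
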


We know from \cite{HH} that in the limit of infinite intensity the magnetic field produces a Dirichlet barrier; for large but finite field,  the particle can tunnel through giving rise to resonances, as confirmed by Theorem~\ref{thm:zero island}. This phenomenon is the magnetic analogue of the `well in an island' studied in \cite[Th\'eor\`eme~9.7]{HS} for the Schr\"odinger operator with a scalar potential. In the setting of \cite{HS}, resonances are exponentially close to the Dirichlet eigenvalues of the island, just as in Theorem~\ref{thm:zero island}.

\subsection{Examples}

We conclude the introduction by  giving examples of vector potentials satisfying our hypotheses.

\begin{example}[Radial step magnetic field]
The magnetic field \(
B(x) = \begin{cases}
1, & |x| < R_0 \\
0, & |x| > R_0
\end{cases}
\)
 is generated by the vector potential
\[
\bA(x) = \begin{cases}
\dfrac{(-x_2, x_1)}{R_0^2}, & |x| \le R_0, \\[1em]
\bA_\alpha(x), & |x| \ge R_0,
\end{cases}
\]
with flux $\alpha=R_0^2/2$. The conclusion of Theorem~\ref{thm:main} holds with $r_0=R_0$.
\end{example}

\begin{example}[Not necessarily radial magnetic field]
Consider  $B_0\in L^2(\mathbb R^2)$ with support $K_0\subset B(0,R_0)$, and let $\mathbf a_0\in H^1(B(0,R_0))$ be a vector potential satisfying\footnote{One may take $\mathbf a_0=(-\partial_2\phi_0,\partial_1\phi_0)$ where $\phi_0\in H^2(B(0,R_0))\cap H^1_0(B(0,R_0))$ solves $\Delta\phi_0=B_0$.} $\Curl\mathbf a_0=B$. Thanks to the smooth Urysohn's lemma, there is a cut-off function $\chi\in C_c^\infty(\mathbb R^2;[0,1])$ such that  $\chi(x)=1$ if $x\in K_0\cup B(0,\epsilon)$, for some $0<\epsilon<R_0,$  and $\chi(x)=0$ if $|x|\geq R_0$. Define the global vector potential:
\[\bA(x)=\chi(x)\mathbf a_0(x)+(1-\chi(x))\bA_{\alpha}(x),\]
where $\bA_\alpha$ is the Aharonov-Bohm potential with flux $\alpha=\frac1{2\pi}\int_{B(0,R_0)}B_0(x)dx$. Clearly, $\bA$ satisfies our hypotheses and the resulting field $B=\Curl\bA$ satisfies $B(x)=B_0(x)$ for $x\in B(0,R_0)$. Consequently,
\begin{itemize}
    \item If $B_0=1$ locally, then by Theorem~\ref{thm:main}, there are resonances near the Landau levels, and the conclusion of Theorem~\ref{thm:main} holds with $r_0$ the inner radius of the set $\{B_0=1\}$,
    \item If $B_0=B^{\rm ah}(x-p)$ locally near a zero $p\in B(0,R_0)$, then by Theorem~\ref{thm:main-ah}, there are resonances near the energy levels  of the anharmonic Lanau Hamiltonian,
    \item If $B_0=B_a^{\rm step}$ locally , then by Theorem~\ref{thm:main*}, there are curvature induced resonances near the energy levels $\lambda_n(h)$, $n\in\mathbb N_0$,
    \item If $B_0=B^{\rm well}$ locally  near the well with the positive minimum at $p_0$, then by Theorem~\ref{thm:main**}, there are resonances near the energy levels $e_n(h)$,
    \item If $B_0=B^{\rm isl}$ in a neighborhood of an open subset $\omega\subset B(0,R_0)$, then there are resonances near the eigenvalues of the Dirichlet Laplacian in $\omega$. 
\end{itemize}
\end{example}
\subsection{Organization}
The rest of the paper is organized as follows. In Section~\ref{sec:def res}, we verify that our operator $P(h)$ satisfies the black box assumptions in \cite{TZ} and define resonances as eigenvalues of a scaled operator following \cite{S}. Sections~\ref{sec:loc-cst}, \ref{sec:ah}, \ref{sec:step-f}, \ref{sec:m-well} and \ref{sec:zero island} are devoted to the proofs of Theorems~\ref{thm:main}, \ref{thm:main-ah}, \ref{thm:main*}, \ref{thm:main**} and \ref{thm:zero island}, respectively. These proofs share the common strategy of  constructing quasimodes and applying the resonance existence theorem of Tang and Zworski \cite{TZ}.

\section{Defining  resonances}\label{sec:def res}

We  present the rigorous definition of resonances for $P(h)$ within the black box scattering framework (see \cite{S, TZ}). After verifying the necessary assumptions, we introduce complex scaling to access the resonances as eigenvalues of a deformed operator.

\subsection{Verification of black box assumptions}\label{sec:blackbox}

Let $\mathcal{H} = L^2(\R^2)$ with orthogonal decomposition
\[
\mathcal{H} = \mathcal{H}_{R_0} \oplus L^2(\R^2 \setminus B(0,R_0)),
\]
where $\mathcal{H}_{R_0} = L^2(B(0,R_0))$ and  $R_0$ is the radius of the disk where $B$ is supported.  For simplicity of notation, we  write $ H^2(\R^2 \setminus B(0,R_0))$ instead of $ H^2(\R^2 \setminus \overline{B(0,R_0)})$.\medskip  

The aim of this section is to verify that our operator \(P(h)\) satisfies the hypotheses of the approach \cite[Section 2, p. 263]{TZ}, which we reproduce in Assumptions 2.1–2.5 below.

\begin{assumption}[Domain condition]\label{ass:D-C}
The projection map $\mathbb{1}_{\R^2\setminus B(0,R_0)}:\mathcal D\to H^2(\mathbb R^2\setminus B(0,R_0))$ is   bounded uniformly with respect to $h$ and has  a uniformly bounded right inverse.
\end{assumption}

The said boundedness refers to the graph norm  $\|(i+P(h))u\|_{L^2(\R^2)}$ in $\mathcal D$, and the semiclassical norm $\|u-h^2\Delta u\|_{L^2}$ in $H^2(\R^2\setminus B(0,R_0))$.

\begin{proof}[Verification of Assumption~\ref{ass:D-C}]
Thanks to \eqref{eq:Domain-H}, restricting a function in $\mathcal D$  to $\R^2\setminus B(0,R_0)$ produces a function in $H^2(\R^2 \setminus B(0,R_0))$, and conversely, a function in $H^2(\R^2 \setminus B(0,R_0))$ can be extended to a function in $H^2(\R^2)$, the domain of $P(h)$. Thus, $\mathbb{1}_{\R^2 \setminus B(0,R_0)} \mathcal{D} = H^2(\R^2 \setminus B(0,R_0))$ and $\mathbb{1}_{\R^2\setminus B(0,R_0)}$
has a right inverse. We have uniform control in $h$  because $\bA$ and its derivatives are bounded on the exterior of $B(0,R_0)$. 
\end{proof}

\begin{assumption}[Compactness]\label{Ass:C}
$\mathbb{1}_{B(0,R_0)} (P(h) + i)^{-1} : \mathcal{H} \to \mathcal{H}_{R_0}$ is compact.
\end{assumption}

\begin{proof}[Verification of Assumption~\ref{Ass:C}]
$(P(h)+i)^{-1}$ maps into $\mathcal{D}$, and its restriction to $B(0,R_0)$ gives $H^2(B(0,R_0))$ by \eqref{eq:Domain-H}.  The compact embedding $H^2(B(0,R_0)) \hookrightarrow L^2(B(0,R_0))$ yields the result.
\end{proof}

\begin{assumption}[Exterior differential operator]\label{ass:ext-op}
For $u \in \mathcal{D}$,
\[
\mathbb{1}_{\R^2 \setminus B(0,R_0)} P(h) u = Q(h)(u|_{\R^2 \setminus B(0,R_0)}),
\]
where $Q(h)$ is a formally self-adjoint differential operator on $L^2(\R^2)$ of the form\footnote{We denote $D_{x_k}:=i\partial_{x_k}$ and  $D_x^{\alpha}:=D_{x_1}^{\alpha_1}D_{x_2}^{\alpha_2}$ for $\alpha=(\alpha_1,\alpha_2)\in\mathbb N_0^2$.}
\[
Q(h)v = \sum_{|\alpha| \leq 2} a_\alpha(x;h)(hD_x)^\alpha v,
\]
with coefficients satisfying:
\begin{itemize}
    \item $a_\alpha(x;h) = a_\alpha(x)$ independent of $h$ for $|\alpha| = 2$,
    \item $a_\alpha(x;h) \in C_b^\infty(\R^2)$ uniformly bounded in $h$,
    \item $\sum_{|\alpha|=2} a_\alpha(x;h)\xi^\alpha \geq (1/c)|\xi|^2$ for some $c > 0$,
    \item $\sum_{|\alpha| \leq 2} a_\alpha(x;h)\xi^\alpha \to \xi^2$ uniformly in $h$ as $|x| \to \infty$.
\end{itemize}
\end{assumption}
\begin{proof}[Verification of Assumption~\ref{ass:ext-op}]
We define $Q(h)$ as a regularized Aharonov--Bohm Hamiltonian near the orgin. Choose a cutoff function $\eta \in C_c^\infty(\R^2)$ with:
\begin{itemize}
\item $\eta(x) = 1$ for $|x| \le R_0/2$,
\item $\eta(x) = 0$ for $|x| \ge R_0$,
\item $0 \le \eta \le 1$.
\end{itemize}
Define a smoothed Aharonov–Bohm potential:
\begin{equation}\label{eq:reg AB}
\tilde{\bA}(x) =  (1 - \eta(x))\left(-\frac{\alpha x_2}{|x|^2},\frac{ \alpha x_1}{|x|^2}\right).
\end{equation}
Then set
\[
Q(h) = (-ih\nabla - \tilde{\bA})^2.\]
Since
$\tilde{\bA}$ is smooth on all $\R^2$ and for $|x| \ge R_0$,  $\tilde{\bA}(x) =\bA(x)$, $Q(h)$ exactly  matches $P(h)$ outside $B(0,R_0)$.
Moreover, for $|x| \le R_0/2$, $\tilde{\bA}(x) = 0$, so $Q(h) = -h^2\Delta$ near the origin.
All coefficients of $Q(h)$ are in $C_b^\infty(\R^2)$ and satisfy the ellipticity and asymptotic conditions.
\end{proof}

\begin{assumption}[Analytic continuation]\label{Ass:AC}
There exist $\theta_0 \in [0,\pi)$, $\epsilon > 0$, and $R \geq R_0$ such that the coefficients $a_\alpha(x;h)$ of $Q(h)$ extend holomorphically in $x$ to the region
\[
\mathcal U=\{r\omega : \omega \in \mathbb{C}^2, \operatorname{dist}(\omega, \mathbb S^1) < \epsilon, r \in \mathbb{C}, |r| > R, \arg r \in [-\epsilon, \theta_0 + \epsilon)\},
\]
and the convergence $\sum_{|\alpha| \leq 2} a_\alpha(x;h)\xi^\alpha \to \xi^2$ as $|x| \to \infty$ remains valid in this larger set.
\end{assumption}

\begin{proof}[Verification of Assumption~\ref{Ass:AC}]
In Assumption~\ref{Ass:AC}, $\mathbb S^1$ is the unit circle, $\{(x_1,x_2)\in\R^2\colon x_1^2+x_2^2=1\}$. Let $w = (w_1, w_2) \in \mathbb{C}^2$ be the complexification of the real coordinates $x = (x_1, x_2) \in \R^2$ such that $|x|\geq  R_0$. In the exterior of $B(0,R_0)$, $\bA=\bA_\alpha$, $\Div\bA=0$, and the coefficients of $Q(h)$ are given by:
\[\begin{gathered}
    a_{(2,0)}(w) = a_{(0,2)}(w) = 1, \quad a_{(1,1)}(w) = 0 \\
    a_{(1,0)}(w)  = \frac{2\alpha w_2}{q(w)},\quad 
    a_{(0,1)}(w)  = -\frac{ 2\alpha w_1}{q(w)} \\
    a_0(w) = \alpha^2\frac{w_1^2 + w_2^2}{(q(w))^2} = \frac{\alpha^2}{q(w)}
\end{gathered}\]
where $q(w)=w_1^2+w_2^2$. Note that  $q(w)$ does not vanish on $\mathcal{U}$ if we choose $R>R_0$ sufficiently large and $\epsilon>0$ sufficiently small. In fact, if $w = r\omega \in \mathcal{U}$, we have $q(w) = r^2 q(\omega)$ with $q(\omega)=1$ if $\omega\in \mathbb S^1$. By continuity, we get $|q(w)|\geq c>0$ if $w\in\mathcal U$. Since $q(w)$ is holomorphic, we get that the functions $a_{(1,0)}(w)$, $a_{(0,1)}(w)$, and $a_0(w)$ are holomorphic in $\mathcal U$. Moreover, the sum 
\[\sum_{|\alpha|\leq 2}a_\alpha(w;h)\xi^\alpha=\xi^2+\frac{2\alpha w_2h}{q(w)}\xi_1-\frac{2\alpha w_1 h}{q(w)}\xi_2+\frac{\alpha^2}{q(w)}\to \xi^2\]
as $|w|\to+\infty$ uniformly with respect to $h$.\medskip
\end{proof}

The last assumption  in the blackbox scattering theory concerns the compactified operator $P^{\#}$ in 
\[\mathcal H^{\#}=\mathcal H_{R_0}\oplus L^2(M\setminus B(0,R_0)), \]
where $M$ is the torus $(\R/\tilde R\mathbb{Z})^2$, with $\tilde R>2R$ and $R>R_0$ is from Assumption~\ref{Ass:AC}. The disk $B(0,R_0)$ can be viewed as a subset of $M$. The operator $P^{\#}$ is equal to $P(h)$ in $B(0,R_0)$. More precisely, we take (see \cite[Remark 7.1]{SZ})
\[P^{\#}=P(h)\chi -h^2\Delta (1-\chi)\]
where $\chi\in C_c^{\infty}(B(0,2R))$ is equal to $1$ on $B(0,R)$, and  the domain of $P^{\#}$ is 
\[\mathcal D^{\#}=\{u\in\mathcal H^{\#}\colon \chi u\in \mathcal D,(1-\chi)u\in H^2\}.\]

\begin{assumption}[Eigenvalue counting for compactified operator]\label{Ass:Weyl}
For $\lambda \geq 1$, it holds
\[
N(P^{\#}(h), [-\lambda, \lambda]) = \mathcal{O}(\lambda/h^2).
\]
\end{assumption}

\begin{proof}[Verification of Assumption~\ref{Ass:Weyl}]
$P^{\#}(h)$ is an elliptic operator on a compact 2-dimensional manifold, so its eigenvalue counting function satisfies a Weyl law $N(\lambda) \sim C\lambda/h^2$, thereby verifying Assumption~\ref{Ass:Weyl}.\medskip
\end{proof}

All assumptions \ref{ass:D-C}--\ref{Ass:Weyl} are satisfied, so the black box scattering theory applies to $P(h)$.

\subsection{Complex scaling}\label{sec:complex scaling}

Following the framework of Sj\"ostrand \cite[Section 5]{S}, we introduce complex scaling to define resonances for the magnetic Hamiltonian $P(h)$. 

\subsubsection{The scaled operator}

Let $\epsilon_0 > 0$ be small and $R_1 > R_0$ be chosen sufficiently large. Define a smooth family of functions $f_\theta: [0,+\infty) \to \mathbb{C}$, depending on a parameter $\theta \in [0,\theta_0]$ with $\theta_0 < \pi/2$, satisfying:

\begin{itemize}
    \item[(i)] $f_\theta(t) = t$ for $0 \le t \le R_1$ (no deformation near the black box interior),
    \item[(ii)] $0 \le \arg f_\theta(t) \le \theta$ and $\partial_t f_\theta(t) \neq 0$,
    \item[(iii)] $\arg f_\theta(t) \le \arg \partial_t f_\theta(t) \le \arg f_\theta(t) + \epsilon_0$,
    \item[(iv)] $f_\theta(t) = e^{i\theta} t$ for $t \ge T_0$, where $T_0$ depends only on $\epsilon_0$ and $R_1$.
\end{itemize}

In polar coordinates $x = t\omega$ with $t = |x|$ and $\omega \in \mathbb S^1$, we define the map
\[
\kappa_\theta: \mathbb{R}^2 \ni x = t\omega \mapsto f_\theta(t)\omega \in \mathbb{C}^2.
\]

The image $\Gamma_\theta = \kappa_\theta(\mathbb{R}^2)$ is a maximally totally real (m.t.r.) submanifold of $\mathbb{C}^2$. By construction, $\Gamma_\theta$ coincides with $\mathbb{R}^2$ on $B(0,R_1)$ and is rotated by angle $\theta$ at infinity (see Figure~\ref{fig:gamma_theta} for illustration).

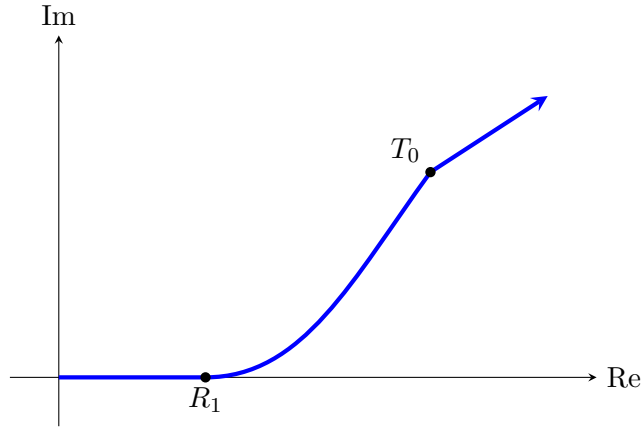
\begin{figure}[htbp]
    \centering
    \begin{tikzpicture}[>=stealth, scale=1.3]
        \draw[->] (-0.5,0) -- (5.5,0) node[right] {$\text{Re}$};
        \draw[->] (0,-0.5) -- (0,3.5) node[above] {$\text{Im}$};
        \coordinate (O) at (0,0);
        \coordinate (R1) at (1.5,0);
        \coordinate (T0_pt) at (3.8, 2.1); 
        \draw[ultra thick, blue] (O) -- (R1);
        \draw[ultra thick, blue] (R1) .. controls (2.5,0) and (3,1) .. (T0_pt);
        \draw[ultra thick, blue, ->] (T0_pt) -- (5, 2.88);
        \fill (R1) circle (1.5pt) node[below] {$R_1$};
        \fill (T0_pt) circle (1.5pt) node[above left] {$T_0$};
    \end{tikzpicture}
    \caption{Illustration of $\Gamma_\theta$: The profile $f_\theta(t)$ remains real for $t \le R_1$ and rotates to $e^{i\theta}$ for $t \ge T_0$.}
    \label{fig:gamma_theta}
\end{figure}

We identify $\Gamma_\theta$ with $\mathbb{R}^2$ via the parametrization $\kappa_\theta$. Define the Hilbert space
\[
\mathcal{H}_\theta = \mathcal{H}_{R_0} \oplus L^2(\Gamma_\theta \setminus B(0,R_0)).
\]

Choose a cutoff function $\chi \in C_c^\infty(B(0,R_1))$ with $\chi = 1$ near $\overline{B(0,R_0)}$. For $u \in \mathcal{H}_\theta$, we define the scaled operator $P_\theta(h)$ by
\[
P_\theta(h)u = P(h)(\chi u) + Q_\theta(h)((1-\chi)u),
\]
where $Q_\theta(h)$ is the differential operator obtained by substituting $x = \kappa_\theta(y)$ into $Q(h)$ and expressing it in the coordinates $y \in \mathbb{R}^2$.
More precisely,
\[
Q_\theta(h) = (-ih\nabla_y - \tilde{\bA}_\theta(y))^2,
\]
where $\tilde{\bA}_\theta(y) = \tilde{\bA}(\kappa_\theta(y)) \cdot d\kappa_\theta(y)$ is the pullback of the vector potential $\tilde\bA$ in \eqref{eq:reg AB}. In coordinates, if $\kappa_\theta(y) = (\kappa_\theta^1(y), \kappa_\theta^2(y))$, then
\[(\tilde{\bA}_\theta)_j(y) = \sum_{i=1}^{2} \tilde{\bA}_i(\kappa_\theta(y)) \frac{\partial \kappa_\theta^i}{\partial y_j}(y).
\]

\subsubsection{Resonances}

Thanks to Assumptions~\ref{ass:D-C}--\ref{Ass:Weyl} verified in Section~\ref{sec:blackbox}, we can apply the general theory of complex scaling developed in \cite{SZ} and \cite[Section 5]{S}, particularly \cite[Lemma 5.1]{S} and the discussion following \cite[(5.8)]{S}. We obtain that for each $\theta \in [0,\theta_0]$ with $\theta_0 < \pi/2$, the scaled operator $P_\theta(h)$ is elliptic and that $P_\theta(h)-z$ is Fredholm of index $0$ for $z \in \mathbb{C} \setminus e^{-2i\theta}[0,\infty)$.  Consequently, by Analytic Fredholm Theory, the spectrum of $P_\theta(h)$ in $\mathbb{C} \setminus e^{-2i\theta}[0,\infty)$ is discrete\footnote{For $\theta=0$, this spectrum, if non-empty, is contained in $(-\infty,0)$.}.

A key property established in \cite[Lemma 5.2]{S} is the independence of the spectrum on the choice of scaling angle $\theta$, provided we stay away from the rotated real axis. Specifically\footnote{We denote $e^{-2i[\theta_1,\theta_2]}[0,\infty):=\{ e^{-2i\theta}t : \theta \in [\theta_1, \theta_2], t \in [0, \infty) \}$.}, for $0 \leq \theta_1 < \theta_2 \le \theta_0$ and $z_0 \in \mathbb{C} \setminus e^{-2i[\theta_1,\theta_2]}[0,\infty)$, the dimensions of $\operatorname{Ker}(P_{\theta_1}(h) - z_0)$ and $\operatorname{Ker}(P_{\theta_2}(h) - z_0)$ coincide. 
This independence allows us to define resonances unambiguously:

\begin{definition}[Resonances]\label{def:resonances}
A point $z_0 \in e^{-2i[0,\theta_0)}(0,\infty)$ is called a \textbf{resonance} of $P(h)$ if and only if $z_0 \in \sigma(P_\theta(h))$ for some  $\theta \in (0,\theta_0)$ with $z_0 \in e^{-2i[0,\theta)}(0,\infty)$.
\end{definition}

For a resonance $z_0$, we define its \textbf{multiplicity} as the rank of the spectral projection
\[
\pi_{\theta,z_0} = \frac{1}{2\pi i} \int_{\mathcal C} (z - P_\theta(h))^{-1} dz,
\]
where $\mathcal C$ is a small positively oriented circle around $z_0$ (see \cite[(5.12)]{S}). This rank is independent of the choice of $\theta$ in the sense that, if $z_0\in e^{-2i[0,\tilde\theta)}(0,\infty)\cap\sigma(P_{\tilde\theta}(h))$ with $\tilde\theta\in[0,\theta_0)$, then $\pi_{\tilde\theta,z_0}=\pi_{\theta,z_0}$.

\begin{remark}
    For the equivalence between complex scaling and meromorphic continuation definitions of resonances in the context of magnetic Hamiltonians with singular vector potentials, we refer to \cite[Section 3]{Y}.
\end{remark}

\subsection{The Tang–Zworski resonance criterion}

We recall the following result from \cite[Theorem, p.~264]{TZ}, which will be our main tool for proving the existence of resonances. 
The statement involves three functions $S(h)$, $R(h)$ and $w(h)$ defined on a right neighborhood of $0$ and satisfying
\[
\begin{gathered}
    De^{-D/h} \leq S(h) \quad\text{for some constant } D>0,\\[2mm]
    R(h) = \mathcal O(h^\infty) \quad\text{and}\quad R(h) \ll S(h) \quad\text{as } h\to0^+,\\[2mm]
    \frac{w(h)^2}{h^{-3} S(h)} \to +\infty \quad\text{as } h\to0^+.
\end{gathered}
\]
A convenient choice satisfying the last condition is $w(h) = h^{-2} \sqrt{S(h)}$.

\begin{theorem}[Tang–Zworski \cite{TZ}]\label{thm:TZ}
Let $P(h)$ be a semiclassical black box operator satisfying Assumptions~\ref{ass:D-C}--\ref{Ass:Weyl}. 
Suppose there exist normalized functions $\mathfrak u_j(h)\in \mathcal D$, $j\in J(h)$, with $\operatorname{supp} \mathfrak u_j(h)\subset B(0,R_0)$ such that
\[
\|(P(h)-\mathcal E_j(h))\mathfrak u_j(h)\|_{L^2(\R^2)} = \mathcal O(R(h)),
\]
where $\mathcal E_j(h)\in [E_0-C_0h,\,E_0+C_0h]$ for some constants $E_0,C_0$, and $|\mathcal E_j(h)-\mathcal E_{j'}(h)|\geq w(h)$ for $j\neq j'$. 
Then for sufficiently small $h$, there exist resonances $z_i(h)$ of $P(h)$ in
\[
[\mathcal E_j(h)-w(h),\,\mathcal E_j(h)+w(h)] + i[-h^{-3}S(h),\,0].
\]
\end{theorem}

\section{Resonances near Landau levels: Proof of Theorem~\ref{thm:main}}\label{sec:loc-cst}

We now turn to the  proof of Theorem~\ref{thm:main} by constructing quasimodes supported near a region where $B=1$ and applying the black box resonance existence theorem of Tang and Zworski.

\subsection{Quasimodes}\label{sec:quasimodes}

Consider  the symmetric gauge $\bA^{\rm sym}(x) = \frac{b}{2}(-x_2, x_1)$, $b>0$. The corresponding magnetic Laplacian  is the Landau Hamiltonian:
\[
H^{\text{L}} = \left(-i\partial_{x_1} + \frac{b}{2}x_2\right)^2 + \left(-i\partial_{x_2} - \frac{b}{2}x_1\right)^2,
\]
which can be expressed in polar coordinates $(r,\varphi)$ as
\[H^{\text{L}}=-\partial_r^2-\frac1r\partial r+\left(-\frac{i}{r}\partial_\varphi-\frac{br}{2}\right)^2.\]
Its eigenvalues are
\[
\Lambda_n(b) = (2n+1)b, \quad n = 0,1,2,\dots,
\]
with eigenfunctions
\[
\psi_{n,m}(r,\varphi) = C_{n,m} \, r^{|m|} e^{im\varphi} e^{-b r^2/4} L_n^{|m|}\left(\frac{b r^2}{2}\right), \quad m \in \mathbb{Z},
\]
where $L_n^{|m|}$ are associated Laguerre polynomials, and $C_{n,m} \sim b^{(|m|+1)/2}$ is a constant to ensure that $\psi_{n,m}$ is normalized in $L^2(\R^2)$. \medskip

Consider $r_0>0$. For any $\delta \in (0,1)$, let $\chi \in C_c^\infty(\mathbb{R}^2)$ be a smooth cutoff satisfying:
\begin{itemize}
    \item $\chi(x) = 1$ for $|x| \le (1-\delta)r_0$,
    \item $\chi(x) = 0$ for $|x| \ge r_0$,
    \item $0 \le \chi \le 1$, $|\nabla \chi| \le C/(\delta r_0)$, $|\Delta \chi| \le C/(\delta r_0)^2$.
\end{itemize}

Define the quasimode
\[
u_{n,m}(x) = \chi(x) \psi_{n,m}(x).
\]
Note that $u_{n,m}\in C_c^\infty(\R^2)$ with support in the disk $B(0,r_0)$.

\begin{lemma}\label{lem:quasimode}
For any fixed $n,m$ and $\delta \in (0,1)$, the quasimode $u_{n,m}$ satisfies
\[\begin{gathered}
    \|u_{n,m}\|_{L^2(\R^2)} = 1 + \mathcal O\big( b^{|m|+1}e^{- (1-\delta)^2 r_0^2b/2} \big) ,\\
    \|(H^{\rm L}-\Lambda_n(b))u_{n,m}\|_{L^2(\R^2)} = \mathcal O\big(b^{|m|+2} e^{- (1-\delta)^2 r_0^2b/4} \big).
\end{gathered}
\]
\end{lemma}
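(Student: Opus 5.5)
The plan is a direct computation based on the exact eigenfunction $\psi_{n,m}$ and a commutator identity.

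\emph{Norm estimate.} Since $\|\psi_{n,m}\|=1$ and $0\le\chi\le1$ with $\chi=1$ on $B(0,(1-\delta)r_0)$, we have
\[
\bigl|\|u_{n,m}\|^2-1\bigr|\le\int_{|x|\ge(1-\delta)r_0}|\psi_{n,m}|^2\,dx .
\]
In polar coordinates this tail equals $2\pi C_{n,m}^2\int_{(1-\delta)r_0}^\infty r^{2|m|+1}e^{-br^2/2}\bigl|L_n^{|m|}(br^2/2)\bigr|^2\,dr$. Substituting $s=br^2/2$ gives, up to constants, $C_{n,m}^2 b^{-|m|-1}\int_{s_0}^\infty s^{|m|}|L_n^{|m|}(s)|^2e^{-s}\,ds$ with $s_0=(1-\delta)^2r_0^2b/2$. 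For fixed $n,m$ the integrand is a polynomial of degree $|m|+2n$ times $e^{-s}$. The integral is therefore $\mathcal O\bigl(s_0^{|m|+2n}e^{-s_0}\bigr)$, which is $\mathcal O\bigl(b^{N}e^{-(1-\delta)^2r_0^2b/2}\bigr)$ for large $b$. Using $C_{n,m}^2\sim b^{|m|+1}$ and absorbing fixed polynomial factors, I expect the stated power $b^{|m|+1}$ to come out; if the powers disagree by a finite amount depending on $n$, that only affects the polynomial prefactor. For $b$ bounded the claim is trivial. Taking the square root, $\|u_{n,m}\|=1+\mathcal O(\cdot)$ with the same exponential rate.

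\emph{Residual estimate.} Because $H^{\rm L}\psi_{n,m}=\Lambda_n(b)\psi_{n,m}$, the commutator formula for the magnetic Laplacian gives
\[
(H^{\rm L}-\Lambda_n)(\chi\psi_{n,m})=-(\Delta\chi)\psi_{n,m}-2\nabla\chi\cdot(\nabla-i\bA^{\rm sym})\psi_{n,m}.
\]
Both terms are supported in the annulus $(1-\delta)r_0\le|x|\le r_0$, and the cutoff derivatives are bounded by $C(\delta r_0)^{-2}$ and $C(\delta r_0)^{-1}$. It therefore suffices to bound $\psi_{n,m}$ and $(\nabla-i\bA^{\rm sym})\psi_{n,m}$ in $L^2$ of the region $|x|\ge(1-\delta)r_0$.

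For the first, the tail computation above already gives a bound $\mathcal O\bigl(b^{(|m|+1)/2+\cdots}e^{-(1-\delta)^2r_0^2b/4}\bigr)$, since taking the square root halves the exponent. For the second, I differentiate the explicit formula. $\partial_r$ produces factors $|m|/r$, $br/2$ and $b r L'$ applied to $L$. $\partial_\varphi/r$ produces $m/r$, and $|\bA^{\rm sym}|=br/2$. Each term is again of the form (polynomial in $r$ and $b$ of fixed degree) $\times$ $r^{|m|}e^{-br^2/4}$. Since $r\ge(1-\delta)r_0$ on the annulus, the factors $1/r$ are harmless, and the same substitution gives an $L^2$ tail $\mathcal O\bigl(b^{|m|+2}e^{-(1-\delta)^2r_0^2b/4}\bigr)$, with the extra factor $b$ coming from the derivative. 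Fixed powers of $b$ are absorbed into the stated prefactor, with constants depending on $n,m,\delta,r_0$.

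\emph{Main obstacle.} The only delicate point is bookkeeping the polynomial prefactors precisely enough to match the stated powers $b^{|m|+1}$ and $b^{|m|+2}$, given the Laguerre polynomial growth. For large $b$ I would handle this with the elementary bound $\int_{s_0}^\infty s^k e^{-s}\,ds\le C_k(1+s_0)^k e^{-s_0}$. Any discrepancy is only in the polynomial factor, and that is irrelevant for the later application with $b=1/h$-type exponentials.
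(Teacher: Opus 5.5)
Your route is the paper's route. You bound $1-\|u_{n,m}\|^2$ by the tail of $|\psi_{n,m}|^2$ beyond $|x|=(1-\delta)r_0$, using $s=br^2/2$ and incomplete-gamma asymptotics. You then write $(H^{\rm L}-\Lambda_n)u_{n,m}$ as a commutator supported in the annulus and control it by the tails of $\psi_{n,m}$ and its (magnetic) gradient. Your commutator formula $-(\Delta\chi)\psi_{n,m}-2\nabla\chi\cdot(\nabla-i\bA^{\rm sym})\psi_{n,m}$ is correct. Keeping the magnetic gradient also correctly tracks the factor $|\bA^{\rm sym}|=br/2$, which the paper treats as a $b$-independent coefficient.

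The step that does not work as written is the prefactor bookkeeping. Carried out honestly, your computation gives $C_{n,m}^2b^{-|m|-1}=\mathcal O(1)$ and $s^{|m|}|L_n^{|m|}(s)|^2\sim s^{|m|+2n}/(n!)^2$. So it only yields a norm defect $\mathcal O\bigl(b^{|m|+2n}e^{-(1-\delta)^2r_0^2b/2}\bigr)$ and a residual $\mathcal O\bigl(b^{(|m|+2n+2)/2}e^{-(1-\delta)^2r_0^2b/4}\bigr)$. For $n\ge1$, respectively $2n>|m|+2$, these exceed the stated $b^{|m|+1}$, respectively $b^{|m|+2}$.

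A larger power of $b$ cannot be ``absorbed'' into a smaller stated prefactor without giving up part of the exponent, which is sharp at radius $\rho=(1-\delta)r_0$. The paper's proof has the same looseness: it writes the prefactor as $r^{2|m|+4n}$, dropping the $b^{2n}$ that comes from $L_n^{|m|}(br^2/2)$.

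The missing idea that recovers the stated powers, in fact any power, is the flatness of the cutoff at $\rho$. Since $\chi\equiv1$ on $\{|x|\le\rho\}$ and $\chi$ is smooth, the functions $1-\chi^2$, $|\nabla\chi|$ and $|\Delta\chi|$ are $\mathcal O\bigl((|x|-\rho)^k\bigr)$ for every $k$ on $\{|x|\ge\rho\}$. Using $r^2\ge\rho^2+2\rho(r-\rho)$, one gets
\[
\int_\rho^\infty (r-\rho)^k r^j e^{-br^2/2}\,dr\le e^{-b\rho^2/2}\int_0^\infty t^k(\rho+t)^j e^{-b\rho t}\,dt=\mathcal O\bigl(b^{-k-1}e^{-b\rho^2/2}\bigr).
\]
Taking $k$ large compensates every polynomial factor coming from $C_{n,m}^2$, the Laguerre polynomial, $\bA^{\rm sym}$ and the derivatives.

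Alternatively, state the lemma with a prefactor $b^{N(n,m)}$. Only the exponential rate is used in the proof of Theorem~\ref{thm:main}, where $\alpha<(1-\delta)^2/4$ absorbs any power of $h^{-1}$.
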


\begin{proof}
Note that the full-space Landau eigenfunctions $\psi_{n,m}$ are normalized,
$\|\psi_{n,m}\|_{L^2(\mathbb{R}^2)} = 1$.  
Since $\chi \equiv 1$ for $|x| \le (1-\delta)r_0$ and $0 \le \chi \le 1$ elsewhere, we have  
\[
\|u_{n,m}\|^2_{L^2(\R^2)} = \int_{\mathbb{R}^2} \chi^2 |\psi_{n,m}|^2 \, dx 
= 1 - \int_{|x| > (1-\delta)r_0} (1-\chi^2) |\psi_{n,m}|^2 \, dx.
\]  
The error term is bounded by the tail of the eigenfunction outside $|x| > (1-\delta)r_0$,
\[
1 - \|u_{n,m}\|^2_{L^2(\R^2)} \le \int_{|x| > (1-\delta)r_0} |\psi_{n,m}|^2 \, dx.
\]  
For large $r$, the asymptotic behavior of $|\psi_{n,m}|^2$ is dominated by the Gaussian factor $e^{-b r^2/2}$, with a polynomial prefactor $r^{2|m|+4n}$.  
Thus the radial tail integral is proportional to  
\[
\int_{(1-\delta)r_0}^\infty r^{2|m|+4n+1} e^{-b r^2/2} \, dr
= 2^{|m|+2n} b^{-(|m|+2n+1)} \int_{t_\delta}^\infty t^{|m|+2n} e^{-t} \, dt,
\]  
where $t_\delta =  (1-\delta)^2 r_0^2b / 2$.  
For large $b$, the incomplete gamma function  in the above expression
 behaves as $\int_{t_\delta}^\infty t^p e^{-t} dt \sim t_\delta^p e^{-t_\delta}$, so the tail integral is $\mathcal{O}\big( e^{-(1-\delta)^2 r_0^2b/2} \big)$.  
Hence $\|u_{n,m}\|^2 = 1 + \mathcal{O}(C_{n,m}^2e^{- (1-\delta)^2 r_0^2b/2})$, which yields the first estimate in the lemma.\medskip

Using that $H^{\text{L}}\psi_{n,m} = \Lambda_n(b)\psi_{n,m}$ and $u_{n,m} = \chi \psi_{n,m}$, we obtain
\[
(H^{\text{L}}-\Lambda_n(b))u_{n,m} = [H^{\text{L}}, \chi] \psi_{n,m}.
\]
Since $\chi$ commutes with multiplication, the commutator expands to:
\[
[H^{\text{L}}, \chi] = -(\Delta \chi) - 2(\nabla \chi) \cdot \nabla - 2i\bA \cdot (\nabla \chi).
\]
Thus,
\[
(H^{\text{L}}-\Lambda_n(b))u_{n,m}(x) = f(x) \psi_{n,m}(x) + g(x) \cdot \nabla \psi_{n,m}(x),
\]
with $f,g$ supported in the annulus $(1-\delta)r_0 \le |x| \le r_0$ and bounded by constants depending on $\delta$ (but independent of $n,m$ and $b$). Consequently, we have
\begin{multline*}
\|(H^{\text{L}}-\Lambda_n(b))u_{n,m}\|_{L^2(\R^2)}^2 \le C_\delta\int_{(1-\delta)r_0 \le |x| \le r_0} (|\psi_{n,m}|^2 + |\nabla \psi_{n,m}|^2) \, dx\\
=\mathcal O(C_{n,m}^2b^2e^{-(1-\delta)^2r_0^2b/2}),
\end{multline*}
thereby establishing the second estimate in the lemma.
\end{proof}

\subsection{From quasimodes to resonances}\label{sec:resonances}

Having verified the black box scattering assumptions in Section~\ref{sec:blackbox} and constructed quasimodes in Section~\ref{sec:quasimodes}, we now apply the theorem of Tang and Zworski \cite{TZ} to deduce the existence of resonances exponentially close to the real axis.

\begin{proof}[Proof of Theorem~\ref{thm:main}]
Suppose that $B=1$ on a disk $B(p,r_0)\subset B(0,R_0)$. Modulo a gauge transformation $\bA\to\bA-\nabla\phi$ in $B(p,r_0)$, the new vector potential  is given by the symmetric-gauge expression, $\bA_{\rm sym}(x-p)$. 

Consider $\alpha\in(0,1/4)$ and the quasimodes $u_n:=u_{n,0}$ from Lemma~\ref{lem:quasimode} (we choose $\delta$ sufficiently small so that $\alpha<(1-\delta)^2/4$). We modify $u_n$ as follows
\begin{enumerate}[i)]
\item Translation: $u_n(x)\mapsto u_n(x-p)$, to get a support in $B(p,r_0)$,
\item  Multiplication by a pure phase: $u_n \mapsto e^{i\phi/h}u_n$, to account for the gauge transformation $\bA(x)\to \bA_{\rm sym}(x-p)=\bA(x)-\nabla \phi(x)$.
\end{enumerate}
Let $\tilde u_n$ denote the modified quasimode, that is
\[\tilde u_n(x)=\begin{cases}
    e^{i\phi(x)/h}u_n(x-p)&\mbox{if }|x-p|<r_0,\\
    0&\mbox{if }|x-p|\geq r_0.
\end{cases}\]
In terms of the semiclassical parameter
$h = 1/b$, we have  for each $n \in \mathbb{N}_0$, 
\begin{itemize}
    \item $\|\tilde u_n\|_{L^2(\R^2)} = 1 + \mathcal{O}(e^{-\alpha r_0^2/h})$,
    \item $\operatorname{supp} \tilde u_n \subset B(p,r_0)$,
    \item $\|(P(h) - E_n(h))\tilde u_n\|_{\mathcal{H}} = \mathcal{O}(e^{-\alpha r_0^2/h})$,
\end{itemize}
where 
\[
E_n(h) = (2n+1)h\in (E_0-C_0h,E_0+C_0h)\quad\mbox{with }E_0=0\mbox{ and }C_0=2n+2.
\]

For distinct indices $n \neq m$, we have
\[
|E_n(h) - E_m(h)| = 2|n-m|h \geq 2h.
\]
We now apply Theorem~\ref{thm:TZ} of Tang and Zworski with $\mathcal E_n(h)=E_n(h)$ and
\begin{itemize}
    \item The error bound $R(h) = C e^{-\alpha r_0^2/h}$,
    \item $S(h) = e^{-c r_0^2/h}$ where $0<c<\alpha$,
    \item $w(h)=h^{-2}e^{-cr_0^2/2h}$.
\end{itemize}
We conclude that there exists at least one resonance of $P(h)$ in
\[
 [E_n(h) - w(h), E_n(h) + w(h)] + i[-h^{-3}S(h), 0].
\]
Since $[0,+\infty)$ is the spectrum of $P(h)$, a resonance cannot be real. 
\end{proof}

\section{Resonances near anharmonic Landau levels: Proof of Theorem~\ref{thm:main-ah}}\label{sec:ah}

The proof  is similar to that of Theorem~\ref{thm:main} and relies on the construction of quasi-modes with exponentially small remainder. These quasi-modes are related to the eigenfunctions of the the anharmonic Landau Hamiltonian. Thus, we begin by presenting some spectral properties of this hamiltonian.

\subsection{The anharmonic Landau Hamiltonian}

Let $\gamma>0$ and $\bA^{\rm ah}=\frac{|x|^{\gamma}}{2+\gamma}(-x_2,x_1)$. Consider the operator $P^{\rm ah}(b)=(-i\nabla-b\bA^{\rm ah})^2$, with $b>0$. Setting its domain to be
\[\{u\in L^2(\R^2)\colon (-i\nabla-b\bA^{\rm ah})u\in L^2(\R^2;\mathbb C^2),(-i\nabla-b\bA^{\rm ah})^2u\in L^2(\R^2)\},\]
the operator $P^{\rm ah}(b)$ is self-adjoint operator in $L^2(\R^2)$, thanks to Friedrichs
 theorem.

Due to the scaling $x\mapsto b^{\frac{1}{2+\gamma}}x$, we reduce to case $b=1$ and obtain the spectrum of $P^{\rm ah}(b)$ as
\[\sigma(P^{\rm ah}(b))=b^{\frac{2}{2+\gamma}}\sigma(P^{\rm ah})).\]

Moreover, if $\psi(x;b)$ is a normalized eigenfunction of $P^{\rm ah}(b)$, then it can be expressed via the normalized eigenfunction of $P^{\rm ah}$ as follows
\[\psi(x;b)=b^{\frac{1}{2+\gamma}}\psi(b^{\frac{1}{2+\gamma}}x ;1).\]

We next prove that the operator $P^{\rm ah}$ has a compact  resolvent.

\begin{proposition}\label{prop:comp-res}
Let $R=(P^{\rm ah}+1)^{-1}$. The operator $R:L^2(\R^2)\to L^2(\R^2)$ is compact.
\end{proposition}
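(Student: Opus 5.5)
The plan is to show that the form domain of $P^{\rm ah}$ embeds compactly into $L^2(\R^2)$. The key ingredient is the standard magnetic lower bound
\[
\int_{\R^2}|(-i\nabla-\bA^{\rm ah})u|^2\,dx\;\geq\;\pm\int_{\R^2}B^{\rm ah}(x)|u|^2\,dx,\qquad u\in C_c^\infty(\R^2).
\]
It follows from the identity
\[
\|(D_1\pm iD_2)u\|^2=\|(-i\nabla-\bA^{\rm ah})u\|^2\mp\int B^{\rm ah}|u|^2,
\qquad D_j=-i\partial_j-\bA^{\rm ah}_j,
\]
which is an integration by parts using $[D_1,D_2]=i\,\Curl\bA^{\rm ah}=iB^{\rm ah}$. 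Since $B^{\rm ah}(x)=|x|^\gamma\ge0$, taking the appropriate sign gives
\[
Q(u):=\|(-i\nabla-\bA^{\rm ah})u\|^2\ge\int|x|^\gamma|u|^2\,dx .
\]
By density this extends to the form domain.

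Next I use the resolvent. Let $f\in L^2$ with $\|f\|\le1$ and set $u=Rf$. Then
\[
Q(u)+\|u\|^2=\langle f,u\rangle\le\|u\|\le1 .
\]
So $u$ lies in the set
\[
K=\Bigl\{u:\ \|(-i\nabla-\bA^{\rm ah})u\|^2+\|u\|^2\le1\Bigr\},
\]
and it suffices to show that $K$ is precompact in $L^2$.

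Two estimates do this.

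\emph{Local regularity.} By the diamagnetic inequality $|\nabla|u||\le|(-i\nabla-\bA^{\rm ah})u|$ a.e., the moduli $|u|$ for $u\in K$ are bounded in $H^1(\R^2)$. To get compactness of $u$ itself, I argue locally instead. On a disk $B(0,\rho)$ the potential $\bA^{\rm ah}$ is bounded, so
\[
\|\nabla u\|_{L^2(B(0,\rho))}\le\|(-i\nabla-\bA^{\rm ah})u\|+C_\rho\|u\|\le 1+C_\rho .
\]
Thus $K$ is bounded in $H^1(B(0,\rho))$. By Rellich, $K|_{B(0,\rho)}$ is precompact in $L^2(B(0,\rho))$.

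\emph{Uniform tails.} The magnetic lower bound gives
\[
\int_{|x|>\rho}|u|^2\,dx\le\rho^{-\gamma}\int|x|^\gamma|u|^2\,dx\le\rho^{-\gamma}
\]
uniformly for $u\in K$. This tends to $0$ as $\rho\to\infty$, and here we need $\gamma>0$.

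A diagonal argument, or total boundedness, combines the two estimates to show $K$ is precompact in $L^2(\R^2)$, so $R$ is compact.

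The step that needs care is justifying the commutator identity and its sign rigorously on the operator domain. Two points must be checked: the boundary terms vanish for $C_c^\infty$ functions, which are a form core by the Friedrichs construction, and the $H^1_{\rm loc}$ regularity of $\bA^{\rm ah}$ (it is actually smooth away from $0$ and $C^1$ near $0$) suffices for the integration by parts. Everything else is routine.
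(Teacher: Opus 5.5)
Your proof is correct, and it departs from the paper's at exactly the point where the paper's argument breaks down.

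Both proofs use the lower bound $\|(-i\nabla-\bA^{\rm ah})u\|^2\ge\int_{\R^2}|x|^\gamma|u|^2\,dx$ to control the tails. For local compactness, however, the paper relies only on the diamagnetic inequality. It asserts that $X=\{u\in L^2(\R^2)\colon |u|\in H^1(\R^2),\ |x|^{\gamma/2}u\in L^2(\R^2)\}$ embeds compactly in $L^2(\R^2)$. As stated this is false, because the $X$-``norm'' only sees $|u|$. Take a real-valued $\phi\in C_c^\infty(\R^2)$ with $\|\phi\|_{L^2}=1$. The functions $u_k=e^{ikx_1}\phi$ have constant $X$-norm, tend weakly to $0$, and have unit $L^2$-norm, so no subsequence converges in $L^2$. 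In fact $X$ is not even a linear space.

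You identified exactly this issue: the diamagnetic inequality bounds $|u|$ in $H^1$ but says nothing about the phase. You replaced it by the estimate $\|\nabla u\|_{L^2(B(0,\rho))}\le 1+C_\rho$. This uses only that $\bA^{\rm ah}$ is bounded on compact sets, and it gives genuine $H^1_{\rm loc}$ control of $u$ itself. Rellich on $B(0,\rho)$ together with the uniform tail bound $\rho^{-\gamma}$ then yields total boundedness.

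As for what each route buys:
\begin{itemize}
\item The diamagnetic approach is gauge invariant and imposes no size condition on $\bA$, which makes it the natural tool for singular potentials. By itself, though, it only gives compactness of the moduli. Reaching compactness of $R$ needs further input, for example semigroup domination combined with a theorem on operators dominated by compact positive ones, and the paper does not provide this.
\item Your approach needs only $\bA^{\rm ah}\in L^\infty_{\rm loc}$, which holds here since $\bA^{\rm ah}$ is continuous, and it is completely elementary.
\end{itemize}
In effect, your argument is the corrected version of the paper's proof.
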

\begin{proof}
Note that $B^{\rm ah}=\Curl\bA^{\rm ah}=|x|^\gamma$. The space
\[X=\{u\in L^2(\R^2)\colon |u|\in H^1(\R^2),~|x|^{\gamma/2}u\in L^2(\R^2)\}, \]
with the norm $\|u\|_X=\|\nabla|u|\|_{L^2(\R^2)}+\|u\|_{L^2(\R^2)}+\||x|^{\gamma/2}u\|_{L^2(\R^2)}$, is compactly embedded in $L^2(\R^2)$. Using the inequality
\[\int_{\R^2}|(-i\nabla-\bA)u|^2dx\geq \int_{\R^2}B^{\rm ah}|u|^2\,dx\quad (u\in C_c^\infty(\R^2)), \]
and the diamagnetic inequality $|\,\nabla|u|\,|\leq |(-i\nabla-\bA^{\rm ah})u|$, we get that $R:L^2(\R^2)\to X$ is continuous.
\end{proof}

We denote the sequence of strictly increasing eigenvalues of $P^{\rm ah}$ by $(\Lambda_n^\gamma)_{n\geq 0}$, and by the scaling law, the $n$'th eigenvalue of $P^{\rm ah}(b)$ as
\[\Lambda_n^\gamma(b)=b^{\frac{2}{2+\gamma}}\Lambda_n^\gamma.\]

We now turn to the decay of eigenfunctions, which we express in polar coordinates $(r,\varphi)$.

\begin{proposition}\label{label:dec-ef-ah}
Let $n\in\mathbb N_0$.  There exists an integer $m$ and a radial function $f$ such that, for all $b>0,$ the function
\[\psi_n(x;b)=b^{\frac{1}{2+\gamma}} e^{i m\varphi}f(b^{\frac{1}{2+\gamma}}r),\]
is a normalized eigenfunction of $P^{\rm ah}(b)$ with eigenvalue $\Lambda_n^\gamma(b)$. Furthermore, there exist positive constant $c_0,M$ such that 
\[\int_{\R_+}\Bigl(|f'(r)|^2+ |f(r)|^2\Bigr)e^{2c_0r^{2+\gamma} }rdr\leq M.\]
\end{proposition}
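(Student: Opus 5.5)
The plan is to use rotational symmetry to reduce to radial problems, and then prove the decay by an Agmon-type weighted estimate. Since $\bA^{\rm ah}$ is radially symmetric, $P^{\rm ah}$ commutes with the angular momentum $-i\partial_\varphi$, so $L^2(\R^2)=\bigoplus_{m\in\mathbb Z}L^2(\R_+,rdr)\otimes e^{im\varphi}$ reduces it. On the $m$-th sector it acts as the radial operator
\[
\mathfrak h_m=-\partial_r^2-\frac1r\partial_r+\Bigl(\frac{m}{r}-\frac{r^{1+\gamma}}{2+\gamma}\Bigr)^2 .
\]
Each $\mathfrak h_m$ has compact resolvent by Proposition~\ref{prop:comp-res}, and $\sigma(P^{\rm ah})=\overline{\bigcup_m\sigma(\mathfrak h_m)}$. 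Fix $n$. Since $\Lambda_n^\gamma$ is an eigenvalue of $P^{\rm ah}$, its eigenspace is finite-dimensional and invariant under rotations. Hence it contains a nonzero joint eigenfunction of the form $e^{im\varphi}f(r)$ for some $m$; we normalize it so that $2\pi\int|f|^2rdr=1$. The scaling law stated before Proposition~\ref{prop:comp-res} then gives directly that $\psi_n(x;b)=b^{\frac1{2+\gamma}}e^{im\varphi}f(b^{\frac1{2+\gamma}}r)$ is a normalized eigenfunction of $P^{\rm ah}(b)$ with eigenvalue $\Lambda_n^\gamma(b)$. This settles the first part, with $m$ and $f$ independent of $b$.

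For the decay we work with $u=e^{im\varphi}f$ at $b=1$ and use an Agmon estimate. Let $\Phi(r)=c_0r^{2+\gamma}$, and for $N>0$ truncate it as $\Phi_N=\min(\Phi,N)$, which is Lipschitz, so that $e^{\Phi_N}u$ lies in the form domain. The IMS/Agmon identity gives
\[
\int|(-i\nabla-\bA^{\rm ah})(e^{\Phi_N}u)|^2dx-\int|\nabla\Phi_N|^2e^{2\Phi_N}|u|^2dx=\Lambda_n^\gamma\int e^{2\Phi_N}|u|^2dx .
\]
The lower bound $\int|(-i\nabla-\bA)v|^2\ge\int B^{\rm ah}|v|^2$ used in the proof of Proposition~\ref{prop:comp-res} gives only weight $r^\gamma$, which is too weak against $|\nabla\Phi|^2\sim r^{2+2\gamma}$. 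So instead we use the radial structure: the kinetic part of the form on $e^{\Phi_N}u$ dominates
\[
\int V_m\,e^{2\Phi_N}|f|^2rdr,\qquad V_m(r)=\Bigl(\frac{m}{r}-\frac{r^{1+\gamma}}{2+\gamma}\Bigr)^2\sim\frac{r^{2+2\gamma}}{(2+\gamma)^2}\quad(r\to\infty).
\]
Choose $c_0$ small, for instance $c_0(2+\gamma)<\frac1{2(2+\gamma)}$. Then $|\Phi'|^2\le\frac12V_m+C$ for $r\ge r_1$, and we obtain
\[
\int_{r\ge r_1}\Bigl(\tfrac12V_m-C-\Lambda_n^\gamma\Bigr)e^{2\Phi_N}|f|^2rdr\le C'\int_{r<r_1}|f|^2rdr .
\]
Enlarging $r_1$ so that the bracket is $\ge1$, this is uniform in $N$, and letting $N\to\infty$ bounds $\int e^{2\Phi}|f|^2rdr$. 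To control $f'$, we plug the weighted bound back into the identity; this bounds $\int|\partial_r(e^{\Phi}f)|^2rdr$. Since $\Phi'e^\Phi f$ is already controlled by the $V_m$ weight, we get $\int|f'|^2e^{2\Phi}rdr\le M$.

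The main obstacle is this last inequality: choosing the weight so that the $|\nabla\Phi|^2$ loss is absorbed by the effective radial potential $V_m$ rather than by the magnetic field $B^{\rm ah}$, while keeping the constants finite for the fixed $m$. A secondary technical point is justifying the truncation $\Phi_N$ and the integrations by parts near $r=0$, where the $m^2/r^2$ term sits. Both are routine once the radial reduction is in place.
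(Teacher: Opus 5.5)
Your proposal is correct and follows essentially the same route as the paper. The first part comes from reduction to the radial operators $\mathfrak h_m$ plus scaling, and the decay from an Agmon identity with weight $c_0 r^{2+\gamma}$ whose gradient loss is absorbed by the effective potential $V_m$ (rather than by $B^{\rm ah}$), followed by truncation and monotone convergence. The differences are minor: the paper truncates with the compactly supported piecewise weight $\chi_N$ instead of $\min(\Phi,N)$, and your explicit step converting the bound on $(e^{\Phi}f)'$ into one on $e^{\Phi}f'$, using the $V_m$-weighted control, is a detail the paper leaves implicit.
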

\begin{proof}
    By exploiting the rotational symmetry, the operator $P^{\rm ah}(b)$ is unitarily equivalent to the direct sum of the operators
    \[\bigoplus_{m\in\mathbb Z}\mathfrak h_m(b) \quad\mbox{ in }L^2(\R_+,rdr)\otimes L^2(\mathbb S^1)\cong L^2(\R^2), \]
    where
    \[\mathfrak h_m(b)=-\frac{d^2}{dr^2}-\frac{1}{r}\frac{d}{dr}+\Bigl(\frac{m}{r}-\frac{b r^{1+\gamma}}{2+\gamma} \Bigr)^2.\]
    By the scaling $r\mapsto b^{\frac{1}{2+\gamma}} r$, we reduce the task to the case $b=1$. Choose $m\in\mathbb Z$ such that $\Lambda_n^\gamma$ is an eigenvalue of $\mathfrak h_m(1)$. Then, a corresponding eigenfunction of $P^{\rm ah}$ is
    \[\psi_n(x;1)=e^{im\varphi}f(r)\]
    where $f$ is an eigenfunction of $\mathfrak h_m(1)$, normalized in $L^2(\R_+,rdr)$. By the scaling law, we obtain the eigenfunction $\psi_n(x;b)$ of $P^{\rm ah}(b)$.

    We establish the decay of $f$ at infinity by using the method of Agmon. Let $\chi\in C_c^\infty(\R_+)$. Starting from the identity,
    \[\mathfrak h_m(1)f=\Lambda_n^\gamma f,\]
    and taking the inner product with $e^{2c_0\chi}f$ in $L^2(\R_+,rdr)$, we get by integration by parts,
    \[\int_{\R_+}\bigl(f'(r)(e^{2c_0\chi}f(r))'+V_m (r)|e^{c_0\chi(r)}f(r)|^2\bigr)rdr=\Lambda_n^\gamma\int_{\R_+}|e^{c_0\chi(r)}f(r)|^2rdr,\]
    where
    \[V_m(r)=\Bigl(\frac{m}{r}-\frac{r^{1+\gamma}}{2+\gamma} \Bigr)^2.\]
    Noticing that
    \[f'(e^{2c_0\chi}f)'=[(e^{c_0\chi}f)']^2-c_0^2(\chi')^2(e^{c_0\chi}f)^2,\]
    we get eventually
    \begin{equation}\label{eq:IMS}
    \int_{\R_+}\Bigl(|(e^{c_0\chi(r)}f(r))'|^2+\bigl(V_m(r)-\Lambda_n^\gamma-c_0^2|\chi'(r)|^2\bigr)|e^{c_0\chi(r)}f(r)|^2\Bigr)rdr=0, 
    \end{equation}
By density, \eqref{eq:IMS} holds for $\chi\in H^1_0(\R_+)$. For a given integer $N\geq 2$, we choose $\chi=\chi_N$ where
\[\chi_N(r)=\begin{cases}
    r^{2+\gamma}&\mbox{if }0\leq r<N,\\
    (2+\gamma)N^{1+\gamma}(r_N-r)&\mbox{if }N\leq r\leq r_N:=\frac{3+\gamma}{2+\gamma}N,\\
    0&\mbox{if }r\geq r_N.
\end{cases}\]
Note that, for all $r\in\R_+$, we have $|\chi'_N(r)|\leq (2+\gamma)r^{1+\gamma}$ and 
\[V_m(r)\geq \frac{r^{2+2\gamma}}{(2+\gamma)^2}-\frac{2m}{2+\gamma}r^\gamma.\]
Choose $R_0> 1$ sufficiently large and $c_0$ sufficiently small such that
\[V_m(r)-\Lambda_n^\gamma-c_0^2(2+\gamma)^2r^{2+2\gamma}\geq 1,\quad\forall\,r\geq R_0.\]
Going back to \eqref{eq:IMS} and splitting the integral on the intervals $(0,R_0)$ and $(R_0,+\infty)$, we obtain
\[\begin{split}\int_{R_0}^{+\infty}\Bigl(|(e^{c_0\chi(r)}f(r))'|^2+\bigl(V_m(r)-\Lambda_n^\gamma-c_0^2|\chi'(r)|^2\bigr)|e^{c_0\chi(r)}f(r)|^2\Bigr)rdr
\\
\leq \int_{0}^{R_0}(\Lambda_n^\gamma+c_0^2|\chi'(r)|^2)|e^{c_0\chi(r)}f(r)|^2rdr.\end{split}\]
Thus, using the last inequality together with \eqref{eq:IMS} for $c_0=0$ and the fact that $V_m\ge 0$, we conclude that
 \[\int_{0}^N\Bigl(|(e^{c_0\chi_N(r)}f(r))'|^2+|e^{c_0\chi_N(r)}f(r)|^2\Bigr)rdr\leq M_0 e^{2c_0R_0^{2+\gamma}}\int_{0}^{R_0}|f(r)|^2rdr,\]
 with
 \[M_0=2\Lambda_n^\gamma+c_0^2\|\chi_N'\|_{L^\infty(0,R_0)}^2=2\Lambda_n^\gamma+(2+\gamma)^2c_0^2R_0^{2+2\gamma},\]
holds for all $N>R_0$; to finish the proof, we take $N\to+\infty$ and conclude by monotone convergence.
\end{proof}

\subsection{Quasi-modes and existence of resonances}

Following the construction in Section~\ref{sec:quasimodes}, we define quasimodes for the anharmonic Landau Hamiltonian $P^{\rm ah}(b)$. For a fixed $n \in \mathbb{N}_0$, let $\psi_n(x;b)$ be the normalized eigenfunction from Proposition~\ref{label:dec-ef-ah}.

Consider $r_0 > 0$ and a smooth cutoff $\chi \in C_c^\infty(\mathbb{R}^2)$ such that $\chi(x)=1$ for $|x| \le (1-\delta)r_0$ and $\chi(x)=0$ for $|x| \ge r_0$. We define the  quasimode:
\[
u_n(x;b) = \chi(x) \psi_n(x;b).
\]

\begin{lemma}\label{lem:quasimode-ah}
For fixed $n$ and $\delta \in (0,1)$, there exist positive constants $b_0,c_1$ such that, for all $b\geq b_0$, the quasimode $u_n$ satisfies:
\[\begin{gathered}
    \|u_n\|_{L^2(\R^2)} = 1 + \mathcal{O}\big( e^{-c_1 b} \big), \\
    \|(P^{\rm ah}(b) - \Lambda_n^\gamma(b))u_n\|_{L^2(\R^2)} = \mathcal{O}\big( e^{-c_1 b} \big).
\end{gathered}\]
\end{lemma}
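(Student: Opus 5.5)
The argument follows the proof of Lemma~\ref{lem:quasimode}, with the explicit Gaussian tail replaced by the weighted Agmon estimate of Proposition~\ref{label:dec-ef-ah}. Write $\lambda=b^{\frac{1}{2+\gamma}}$, so that $\psi_n(x;b)=\lambda e^{im\varphi}f(\lambda r)$. Both error terms will be controlled by integrals of $|\psi_n|^2$ and $|\nabla\psi_n|^2$ over $\{|x|\geq(1-\delta)r_0\}$. Changing variables $s=\lambda r$, these become integrals of $|f(s)|^2$ and of $\lambda^2|f'(s)|^2$ (plus $\lambda^2 m^2 s^{-2}|f|^2$ from the angular derivative) over $s\geq \lambda(1-\delta)r_0$, with measure $s\,ds$.

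\textbf{Step 1 (norm).} Since $0\le\chi\le1$ and $\chi=1$ on $B(0,(1-\delta)r_0)$, we have
\[
0\le 1-\|u_n\|^2\le \int_{\lambda(1-\delta)r_0}^\infty |f(s)|^2 s\,ds \le e^{-2c_0(\lambda(1-\delta)r_0)^{2+\gamma}}\int_{\R_+}|f|^2e^{2c_0s^{2+\gamma}}s\,ds.
\]
The weighted bound $M$ of Proposition~\ref{label:dec-ef-ah} then gives
\[
1-\|u_n\|^2\le M e^{-2c_0(1-\delta)^{2+\gamma}r_0^{2+\gamma}\,b}.
\]
The exponent is linear in $b$ because $\lambda^{2+\gamma}=b$. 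This is the key point: the Agmon weight $e^{c_0 s^{2+\gamma}}$ is exactly matched to the scaling.

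\textbf{Step 2 (residual).} As before, $(P^{\rm ah}(b)-\Lambda^\gamma_n(b))u_n=[P^{\rm ah}(b),\chi]\psi_n$, and
\[
[P^{\rm ah}(b),\chi]=-\Delta\chi-2\nabla\chi\cdot\nabla-2ib\bA^{\rm ah}\cdot\nabla\chi .
\]
The coefficients are supported in the annulus $(1-\delta)r_0\le|x|\le r_0$, where $\bA^{\rm ah}$ is bounded. Hence
\[
\|[P^{\rm ah}(b),\chi]\psi_n\|^2\le C_\delta(1+b^2)\int_{(1-\delta)r_0\le|x|\le r_0}\bigl(|\psi_n|^2+|\nabla\psi_n|^2\bigr)dx .
\]
After rescaling, this is bounded by $C_\delta (1+b^2)(1+\lambda^2)(1+m^2/(\lambda(1-\delta)r_0)^2)$ times $\int_{s\ge\lambda(1-\delta)r_0}(|f|^2+|f'|^2)s\,ds$. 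Applying the same weighted estimate as in Step 1, the whole expression is at most a polynomial in $b$ times $e^{-2c_0(1-\delta)^{2+\gamma}r_0^{2+\gamma}b}$.

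\textbf{Step 3 (absorbing polynomials).} Choose $c_1$ strictly smaller than $c_0(1-\delta)^{2+\gamma}r_0^{2+\gamma}$. The polynomial prefactors are then absorbed for $b\ge b_0$, which yields both estimates of Lemma~\ref{lem:quasimode-ah}. Note that $m$ and $n$ are fixed, so their dependence only affects constants.

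The only delicate point is in Step 2: the angular part of the gradient carries a factor $s^{-1}$, and the derivative term requires the $f'$ part of the weighted bound, not just the $L^2$ part. Both are harmless on the region $s\ge\lambda(1-\delta)r_0\ge1$, and the weighted bound in Proposition~\ref{label:dec-ef-ah} does include $|f'|^2$. So no new Agmon argument is needed; the proof reduces to bookkeeping of the scaling.
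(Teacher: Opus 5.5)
Your proposal is correct and follows the same route as the paper. You cut off the rescaled eigenfunction and control both the norm defect and the commutator term $[P^{\rm ah}(b),\chi]\psi_n$ by the weighted Agmon estimate of Proposition~\ref{label:dec-ef-ah}, using $(b^{\frac{1}{2+\gamma}}r)^{2+\gamma}=br^{2+\gamma}$ to make the decay exponential in $b$. The only difference is that you carry out bookkeeping the paper leaves implicit: the $\lambda^2$ factors and angular term in $|\nabla\psi_n|^2$, and absorbing the polynomial prefactors by taking $c_1<c_0(1-\delta)^{2+\gamma}r_0^{2+\gamma}$.
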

\begin{proof}
The proof follows the same lines as Lemma~\ref{lem:quasimode}. The error terms are controlled by the tail of the eigenfunction $\psi_n$. By Proposition~\ref{label:dec-ef-ah}, the eigenfunction decays as $e^{-c_0 (b^{\frac{1}{2+\gamma}} r)^{2+\gamma}} = e^{-c_0 b r^{2+\gamma}}$. Integrating this tail over $|x| > (1-\delta)r_0$ yields the exponential smallness in $b$. The commutator $[P^{\rm ah}(b), \chi]$ is supported in the annulus where $\psi_n$ and its gradient are exponentially small, leading to the second estimate.
\end{proof}

\begin{proof}[Proof of Theorem~\ref{thm:main-ah}]
Assume $B = |x|^\gamma$ on a disk $B(p, r_0)$. By the gauge transformation $\bA \to \bA - \nabla \phi$, the operator $P(h)$ in $B(p, r_0)$ is unitarily equivalent to $P^{\rm ah}(b)$ shifted to $p$.

Let $h = 1/b$. We define the modified quasimodes $\tilde{u}_n(x) = e^{i\phi/h} u_n(x-p; 1/h)$. From Lemma~\ref{lem:quasimode-ah}, we have:
\begin{itemize}
    \item $\|\tilde u_n\|_{L^2} = 1 + \mathcal{O}(e^{-c_1/h})$,
    \item $\operatorname{supp} \tilde u_n \subset B(p, r_0)$,
    \item $\|(P(h) - E_n^\gamma(h))\tilde u_n\| = \mathcal{O}(e^{-c_1/h})$, where $E_n^\gamma(h) = h^{\frac{2}{2+\gamma}} \Lambda_n^\gamma$.
\end{itemize}

Since the eigenvalues $\Lambda_n^\gamma$ are strictly increasing, the separation between levels $E_n^\gamma(h)$ and $E_{n+1}^\gamma(h)$ is of order $h^{\frac{2}{2+\gamma}}$.

We now apply Theorem~\ref{thm:TZ} with $\mathcal E_n(h)=E_n^\gamma(h)$ and
\[R(h)=e^{-c_1/h},\quad S(h) = e^{-c/h},\quad  w(h) = h^{-2}e^{-c/2h}\quad\text{ for $0 < c < c_1$}.\]
The theorem guarantees the existence of a resonance
\[
z_n\in [E_n^\gamma(h) - w(h), E_n^\gamma(h) + w(h)] + i[-h^{-3}S(h), 0).
\]
\end{proof}

\section{Curvature induced resonances: Proof of Theorem~\ref{thm:main*}}\label{sec:step-f}

In this section, we suppose that $\Curl\bA=B_a^{\rm step}$ in a disk $U_0\subset B(0,R_0)$ centered at $x_0$ and of radius $\delta$, where $B_{a}^{\rm step}$ is the step magnetic field introduced in \eqref{eq:m-step}. Recall that $B_{a}^{\rm step}$ is a step function with a jump discontinuity across the curve $\Gamma$, and that $x_0$ is the point where $\Gamma$ has a non-degenerate maximal curvature.

Before heading to the proof of Theorem~\ref{thm:main*}, we collect a few results concerning semiclassical eigenvalue estimates under magnetic steps. Let $\Omega=B(0,R_0)$ and consider a vector potential $\bA^{\rm step}\in H^1(\Omega)$ such that $\Curl\bA^{\rm step}=B_a^{\rm step}$ in $\Omega$.  Let $P^{\rm step}_{\Omega,h}=(-ih\nabla-\bA^{\rm step})^2$ be the Dirichlet Laplacian in $L^2(\Omega)$ with the step magnetic field $B_a^{\rm step}$. This operator was studied in \cite{AHK}, where the following is proved:
\begin{itemize}
    \item The spectrum of $P^{\rm step}_{\Omega,h}$ is purely discrete and consists of the sequence of eigenvalues
    \[\tilde\lambda_0(h)\leq\tilde\lambda_1(h)\leq \dots\]
    repeated according to multiplicity,
    \item For every $n\in\mathbb N_0,$ the eigenvalue $\tilde\lambda_n(h)$ satisfies\footnote{The remainder term is shown to be $\mathcal O(h^{15/8})$ in \cite{AHK}. However, by WKB analysis \cite[Theorem~5.1]{FHK},  it can be improved to $\mathcal O(h^{2})$.} (see \cite[Theorem~1.2]{AHK})
    \begin{equation}\label{eq:asymp-ln}
\tilde\lambda_n(h)=\beta_ah-k_0C_1(a)h^{3/2}+(2n+1)\sqrt{|k_2|}\,C_2(a)h^{7/4}+\mathcal O(h^{2}),
    \end{equation}
    \item If $\psi_n$ is a normalized eigenfunction corresponding to the eigenvalue $\tilde\lambda_n(h)$, then it is localized near the point $x_0$ in the following sense (see \cite[Eq. (6.3) and (6.5)]{AHK})
    \begin{equation}\label{eq:dec-ef}
\int_{\Omega} \left( |\psi_{n}|^2 + h^{-1} |(h\nabla - i\bA^{\rm step})\psi_{n}|^2 \right) \exp\left( 2\alpha h^{-1/8} |x-x_0| \right) dx \le C,
    \end{equation}
    for some constants $\alpha,C>0$. Moreover, the constant $\alpha$ is independent of $n$.
\end{itemize}

\begin{proof}[Proof of Theorem~\ref{thm:main*}]
The proof relies on quasimodes construction and the theorem by Tang--Zworski \cite{TZ}. Let 
$\chi\in C_c^\infty(\R^2;[0,1])$ be a cut-off function such that
\begin{itemize}
    \item $\chi(x)=0$ in the exterior of the disk $U_0=B(x_0,\delta)$,
    \item $\chi(x)=1$ in the disk $\{|x-x_0|\leq \delta/2\}$.
\end{itemize}
For all $n\in\mathbb N_0,$ consider the quasimode
\[u_n(x)=c_n\chi(x)\psi_n(x)e^{i\phi(x)/h},\]
where $\phi:U_0\to \mathbb R$ satisfies
\[\bA-\nabla\phi=\bA^{\rm step}\quad\mbox{in }U_0,\]
and $c_n$ is a constant to enusre that $u_n$ is normalized in $L^2(\R^2)$. By \eqref{eq:dec-ef}, the constant $c_n$ approaches $1$ asymptotically,
\[c_n=1+\mathcal O(e^{-2\alpha_0/ h^{1/8}}),\]
for some $\alpha_0>0$.

Note that
\[P(h)u_n=P^{\rm step}_{\Omega,h}\,\tilde u_n \mbox{ in }\R^2,\]
where $\tilde u_n(x)=\chi(x)\psi_n(x).$ Thus,
\[\bigl\|\bigl(P(h)-\tilde\lambda_n(h)\bigr)u_n\bigr\|_{L^2(\R^2)}\leq \bigl\|[P_{\Omega,h}^{\rm step},\chi]\psi_n\bigr\|_{L^2(\R^2)}=\mathcal O(e^{-2\alpha_1/h^{1/8}})\]
by \eqref{eq:dec-ef}, where $\alpha_1>0$ is a constant.

Now we apply Theorem~\ref{thm:TZ} with 
\[\mathcal E_n(h)=\tilde \lambda_n(h),\quad R(h)=e^{-\alpha_0/h^{1/8}},\quad S(h)=e^{-c/h^{1/8}},\quad w(h)=h^{-2}e^{-c/2h^{1/8}}\,,\]
where $0<c<\alpha_1$. We obtain that the rectangle
\[[\tilde\lambda_n(h) - w(h), \tilde\lambda_n(h) + w(h)] + i[-h^{-3}e^{-c/h^{1/8}}, 0]
\]
contains at least one resonance of $P(h)$.
To conclude, we use   $\tilde\lambda_n(h)=\lambda_n(h)+\mathcal O(h^{15/8})$ by \eqref{eq:asymp-ln}.
\end{proof}

\section{Resonances induced by magnetic wells: Proof of Theorem~\ref{thm:main**}}\label{sec:m-well}

In this section, we turn to the proof of Theorem~\ref{thm:main**}, which relies on semiclassical asymptotics for magnetic wells \cite{HK}. Suppose that $\Curl\bA=B^{\rm well}$ in a disk $U_0\subset B(0,R_0)$ centered at $p_0$ and of radius $\delta$, where $B^{\rm well}$ is a smooth function on $\R^2$ having a non-degenerate positive minimum at $p_0$ as in \eqref{eq:m-well}. Consider the magnetic Laplacian  $P^{\rm well}_{h}=(-ih\nabla-\bA^{\rm well})^2$ in $L^2(\R^2)$, where $\bA^{\rm well}$ is any vector potential satisfying $\Curl\bA^{\rm well}=B^{\rm well}$. We recall the following facts regarding the spectrum of $P^{\rm well}$:
\begin{itemize}
    \item For a given non-negative integer $N$, there exists a $h_0>0$ such that for $0<h<h_0$, the bottom of the spectrum of $P^{\rm well}_{h}$ is a discrete eigenvalue, and $P^{\rm well}_{h}$ has at least $N$ discrete eigenvalues below the essential spectrum (if any) ordered as 
    \[\tilde e_0(h)\leq\tilde e_1(h)\leq \dots \leq e_N(h)\]
    and repeated according to multiplicity,
    \item For every $n\in\mathbb N_0,$ the eigenvalue $\tilde e_n(h)$ satisfies (see \cite[Theorem~1.2]{HK} and \cite[Corollary~1.7]{RV})
    \begin{equation}\label{eq:asymp-en}
\tilde e_n(h)=b_0h+\Bigl(2n\frac{\sqrt{\mathrm{det}\,H}}{b_0}+\frac{(\mathrm{Tr}\,H^{\frac12})^2}{b_0}\Bigr)h^2+\mathcal O(h^{3}),
    \end{equation}
    \item If $\psi_n$ is a normalized eigenfunction corresponding to the eigenvalue $\tilde e_n(h)$, then it is localized near the well $p_0$ in the following sense (see \cite[Proposition~4.1]{RV}): For a given $(\alpha,r)\in(0,1/2)\times\R_+$, there is a constant $C$ such that\footnote{If we impose analyticity assumptions on $B^{\rm well}$, we get exponential decay of order $e^{-\epsilon/h}$, see \cite{GRV}.}
    \begin{equation}\label{eq:dec-ef-en}
\int_{|x-p_0|\geq r} \left( |\psi_{n}|^2 + h^{-1} |(h\nabla - i\bA^{\rm well})\psi_{n}|^2 \right) dx \le Ce^{-1/h^{\alpha}}.
    \end{equation}
\end{itemize}

\begin{proof}[Proof of Theorem~\ref{thm:main**}]
The proof relies on quasimodes construction. Let 
$\chi\in C_c^\infty(\R^2;[0,1])$ be a cut-off function such that
\begin{itemize}
    \item $\chi(x)=0$ in the exterior of the disk $U_0=B(p_0,\delta)$,
    \item $\chi(x)=1$ in the disk $\{|x-p_0|\leq \delta/2\}$.
\end{itemize}
For all $n\in\mathbb N_0,$ consider the quasimode
\[u_n(x)=c_n\chi(x)\psi_n(x)e^{i\phi(x)/h},\]
where $\phi:U_0\to \mathbb R$ satisfies
\[\bA-\nabla\phi=\bA^{\rm well}\quad\mbox{in }U_0,\]
and $c_n$ is a constant (asymptotic to $1$) to $u_n$ is normalized in $L^2(\R^2)$.

We have
\[P(h)u_n=P^{\rm well}_{\Omega,h}\,\tilde u_n \mbox{ in }\R^2,\]
where $\tilde u_n(x)=\chi(x)\psi_n(x).$ 
Thus, by \eqref{eq:dec-ef-en},
\[\bigl\|\bigl(P(h)-\tilde e_n(h)\bigr)u_n\bigr\|_{L^2(\R^2)}=\mathcal O(e^{-1/h^{\alpha}}).\]

Applying Theorem~\ref{thm:TZ}, we obtain that for a given $\alpha\in(0,1/2)$, the rectangle
\[[\tilde e_n(h) - h^{-2}e^{-1/2h^\alpha}, \tilde e_n(h) + h^{-2}e^{-1/2h^\alpha}] + i[-h^{-3}e^{-1/h^{\alpha}}, 0]
\]
contains at least one resonance of $P(h)$. In light of \eqref{eq:asymp-en}, this finishes the proof of Theorem~\ref{thm:main**}.
\end{proof}

\section{Zero-field island: Proof of Theorem~\ref{thm:zero island}}\label{sec:zero island}

This section is devoted to the proof of Theorem~\ref{thm:zero island}. We suppose that $\Curl\bA=B^{\rm isl}$ in a relatively open subset of $\tilde\omega$, where $B^{\rm inst}$ is the step function equal to $0$ in $\omega$ and to $1$ in the the exterior of $\omega$. The proof of Theorem~\ref{thm:zero island} relies on the construction of quasi-modes defined by the eigenfunctions of a specific auxiliary operator: a magnetic Laplacian in a bounded set, with a vector potential $\bA^{\rm isl}$ satisfying
\[\Curl\bA^{\rm isl}=B^{\rm isl}\quad\mbox{in }\R^2.\]

\subsection{An auxiliary problem}
Recall that $\omega\subset\overline\omega\subset\tilde\omega$ and that $\partial\omega$ is smooth. We can find an open subset $\hat\omega$ with smooth boundary such that $\omega\subset\hat\omega\subset\tilde\omega$. 

For all $b>0$, consider the magnetic Laplacian $\mathcal L(b):=(-i\nabla-b\bA^{\rm isl})^2$ in $L^2(\hat\omega)$, with magnetic Neumann boundary condition. Let $(\hat\ell_n(b))_{n\geq 0}$ be the strictly increasing sequence of eigenvalues of $\mathcal L(b)$.

For a given fixed $n\in\mathbb N_0$, the eigenvalue $\hat\ell_n(b)$ converges to the $n$'th eigenvalue $\ell_n$ of the Dirichlet Laplacian $-\Delta$ in $L^2(\omega)$. For a proof see \cite{GKS} (and also \cite{KS}).

We would like to understand the behavior of the eigenfunctions, notably their decay in the set $\hat\omega\setminus\omega$, as $b\to+\infty$. We prove a quantitative  decay estimate in the next proposition, which involves the distance function
\[t_\omega(x)=\begin{cases}
0&\mbox{if }x\in \omega,\\
\mathrm{dist}(x,\omega)&\mbox{if }x\not\in\omega.    
\end{cases}\]

\begin{proposition}[Decay of eigenfunctions]\label{prop:dec-ef-zero island}
    Let $n\in\mathbb N_0$. There exists positive constant $b_0$ and $M$ such that, for all $b\geq b_0$, if $\psi_n$ is a normalized eigenfunction of $\mathcal L(b)$ with eigenvalue $\hat\ell_n(b)$, then
    \[\int_{\hat\omega\setminus\omega} \Bigl(|\psi_n|^2+|(-i\nabla-b\bA^{\rm isl})\psi_n|^2\Bigr)\exp\bigl(\mbox{$\frac12$} b^{1/2}t_\omega(x)\bigr)dx\leq M/b.\]
\end{proposition}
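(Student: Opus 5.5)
We plan to use an Agmon-type weighted energy estimate. Because $\Curl\bA^{\rm isl}=1$ on $\hat\omega\setminus\omega$, the magnetic field supplies a lower bound there that plays the role of a potential barrier of height $b$.

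\emph{Step 1: weighted identity.} Take a Lipschitz weight $\Phi=\tfrac14 b^{1/2}t_\omega$, which has $|\nabla\Phi|\le \tfrac14 b^{1/2}$ a.e. Multiply the eigenvalue equation $\mathcal L(b)\psi_n=\hat\ell_n(b)\psi_n$ by $e^{2\Phi}\bar\psi_n$ and integrate by parts. The magnetic Neumann condition kills the boundary terms on $\partial\hat\omega$. This gives the standard Agmon identity
\[
\int_{\hat\omega}\bigl|(-i\nabla-b\bA^{\rm isl})(e^{\Phi}\psi_n)\bigr|^2dx-\int_{\hat\omega}|\nabla\Phi|^2|e^{\Phi}\psi_n|^2dx=\hat\ell_n(b)\int_{\hat\omega}|e^{\Phi}\psi_n|^2dx .
\]

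\emph{Step 2: localization.} Choose a partition of unity $\chi_1^2+\chi_2^2=1$. Let $\chi_1$ be supported in $\{t_\omega<2b^{-1/2}\}$ and $\chi_2$ in $\{t_\omega>b^{-1/2}\}$, with $|\nabla\chi_j|\le Cb^{1/2}$. The IMS formula splits the kinetic term into the two localized pieces minus an error $\sum_j\||\nabla\chi_j|e^\Phi\psi_n\|^2\le Cb\int_{\{b^{-1/2}<t_\omega<2b^{-1/2}\}}|e^\Phi\psi_n|^2$. On the support of $\chi_2$ the field equals $1$ and $\chi_2e^\Phi\psi_n$ vanishes near $\partial\omega$. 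It does not vanish on $\partial\hat\omega$, though, so for the lower bound
\[
\int|(-i\nabla-b\bA^{\rm isl})v|^2\ge b\int|v|^2
\]
we must either localize $\chi_2$ away from $\partial\hat\omega$ as well or use the Neumann-boundary version $\ge \Theta_0 b\int|v|^2-Cb^{3/4}\int|v|^2$ with $\Theta_0\approx0.59$. The latter still dominates $|\nabla\Phi|^2\le b/16$ and $\hat\ell_n(b)=\mathcal O(1)$. Rearranging yields
\[
c\,b\int|\chi_2e^\Phi\psi_n|^2\le Cb\int_{\{t_\omega<2b^{-1/2}\}}|e^\Phi\psi_n|^2 .
\]
On that thin region $e^{\Phi}\le e^{1/2}$ is bounded, so the right side is at most $Cb\int_{\{t_\omega<2b^{-1/2}\}}|\psi_n|^2$.

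\emph{Step 3: mass in the thin layer.} To get the factor $1/b$, we need $\int_{\{0<t_\omega<2b^{-1/2}\}}|\psi_n|^2=\mathcal O(b^{-2})$. We propose to first run the same argument without weight, comparing with the Dirichlet limit. Since $\hat\ell_n(b)\to\ell_n$ and the field is $\ge$ order $b$ outside $\omega$, the quadratic form gives $\int_{\hat\omega\setminus\omega}|\psi_n|^2\le C/b$. For the thin layer we would instead use a one-dimensional Hardy/Poincaré estimate in the normal variable, writing $|\psi_n|$ near $\partial\omega$ via its boundary trace. The trace is controlled by $\|\psi_n\|_{L^2(\hat\omega\setminus\omega)}\,\|(-i\nabla-b\bA^{\rm isl})\psi_n\|_{L^2(\hat\omega\setminus\omega)}\le C b^{-1/2}\cdot C$ through the diamagnetic inequality. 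This gives a layer mass of size $b^{-1/2}\cdot b^{-1/2}=b^{-1}$ times the trace bound. Getting exactly $M/b$ rather than a weaker power is the main obstacle. If a direct bound falls short, we will iterate: insert the weighted bound back with a weight $\tfrac14 b^{1/2}t_\omega$ and then pass from exponent $\tfrac12$ to the stated exponent (note that the weight in the statement, $\exp(\tfrac12 b^{1/2}t_\omega)$, equals $e^{2\Phi}$).

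\emph{Step 4: gradient term.} Finally, the gradient term with weight is recovered from the identity in Step 1 restricted to $\hat\omega\setminus\omega$. It is bounded by $b$ times the weighted $L^2$ mass plus the layer term, which together give the claimed bound $M/b$. A possible further loss of a factor $b$ in the gradient, compared with the $L^2$ part, is what we expect to need the most care.
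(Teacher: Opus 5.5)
\textbf{Gap: Steps~3--4 cannot be completed, and the gradient half of the stated bound is false.} The layer estimate $\int_{\{0<t_\omega<2b^{-1/2}\}}|\psi_n|^2=\mathcal O(b^{-2})$ does not hold. Neither does the $M/b$ bound on the gradient term that it is meant to deliver.

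The strong exterior field acts on $\psi_n$ like a Robin condition on $\partial\omega$ with parameter of order $b^{1/2}$: the barrier is $\approx b^2t^2$ and the penetration length is $b^{-1/2}$. To see this concretely, take $\omega$ a disk, $\hat\omega$ a concentric disk, and the radial ground state, with constant boundary value $\beta=\psi_0|_{\partial\omega}$.
\begin{enumerate}[i)]
\item \emph{Upper bound on $\hat\ell_0(b)$.} Use the trial function $u_0+\beta\eta$ in $\omega$, continued by $\beta e^{-bt_\omega^2/2}$ outside, with $\beta\asymp b^{-1/2}$. Here $u_0$ is the Dirichlet ground state and $\eta=1$ on $\partial\omega$. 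Since $\partial_\nu u_0\neq0$, this gives $\hat\ell_0(b)\le \ell_0-cb^{-1/2}$.
\item \emph{Interior lower bound.} The form on $\omega$ is at least $\ell_0\|\psi_0\|^2_{L^2(\omega)}-C|\beta|$.
\item \emph{Exterior lower bound.} The form on $\hat\omega\setminus\omega$ is at least $c\,b^{1/2}|\beta|^2$. This follows from a trace inequality on a collar, the diamagnetic inequality and the $\Theta_0 b$ bound.
\end{enumerate}
Comparing these forces $|\beta|\asymp b^{-1/2}$. Hence $\int_{\hat\omega\setminus\omega}|(-i\nabla-b\bA^{\rm isl})\psi_0|^2\,dx\gtrsim b^{-1/2}\gg b^{-1}$, while the layer mass is of order $b^{-3/2}$. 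So the loss of a factor $b$ you anticipate in Step~4 is real and cannot be removed.

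What is provable is $\int_{\hat\omega\setminus\omega}\bigl(b|\psi_n|^2+|(-i\nabla-b\bA^{\rm isl})\psi_n|^2\bigr)e^{\frac12 b^{1/2}t_\omega}\,dx\le M$. The paper's proof has the same defect. The right-hand side $\frac{2}{b}\hat\ell_n(b)$ of its last inequality does not follow from the preceding lines, which only give $\hat\ell_n(b)$, since the $\omega$-part of the identity is $\ge-\hat\ell_n(b)$. That weaker bound is all the quasimode construction needs, because $\nabla\chi$ is supported where $t_\omega\ge d>0$ and the weight is exponentially large there.

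\textbf{The true part (weighted $L^2$ bound) needs no localization.} The missing idea is that the Neumann lower bound $\int_{\hat\omega\setminus\omega}|(-i\nabla-b\bA^{\rm isl})v|^2\ge\frac b2\int_{\hat\omega\setminus\omega}|v|^2$ holds for every $v\in H^1(\hat\omega\setminus\omega)$. It comes from $\Theta_0>\frac12$ and requires no vanishing on $\partial\omega$, just as none is required on $\partial\hat\omega$. You invoke this bound yourself, but only after cutting off. The paper applies it directly to $v=e^{\Phi}\psi_n$. Since $\Phi=0$ in $\omega$, the $\omega$-part of your Step~1 identity is bounded below by $-\hat\ell_n(b)$, and one gets at once
\[
\frac12\int_{\hat\omega\setminus\omega}|(-i\nabla-b\bA^{\rm isl})(e^{\Phi}\psi_n)|^2\,dx+\frac b8\int_{\hat\omega\setminus\omega}e^{2\Phi}|\psi_n|^2\,dx\le\hat\ell_n(b),
\]
with no cutoff, no layer term and no trace estimate.

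Your IMS partition is what manufactures the $Cb\int_{\rm layer}|\psi_n|^2$ error. Moreover, the unweighted bound $\int_{\hat\omega\setminus\omega}|\psi_n|^2\le C/b$ you invoke in Step~3 is exactly this argument with $\Phi=0$, so the weighted version was available at no extra cost. Also, as written, the set $\{t_\omega<2b^{-1/2}\}$ in Step~2 contains $\omega$. That right-hand side is therefore of order $b$, and the inequality says nothing.
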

\begin{proof}
Let $\Phi(x)=\epsilon_0 b^{1/2}t_\omega(x)$. We will fix a choice of $\epsilon_0>0$ later on. Using the eigenvalue equation $(-i\nabla-b\bA^{\rm isl})^2\psi_n=\hat\ell_n(b)\psi_n$ and integration by parts, we obtain
\begin{equation}\label{eq:Agmon-zero island}\int_{\hat\omega}\Bigl(|(-i\nabla-b\bA^{\rm isl})(e^{\Phi}\psi_n)|^2- \bigl(\epsilon_0^2b|\nabla t_\omega|^2+\hat\ell_n(b) \bigr)|e^{\Phi}\psi_n|^2\Bigr)dx=0.
\end{equation}
Since the magnetic field is constant in $\tilde\omega\setminus\omega$, the lowest eigenvalue of $(-i\nabla-b\bA^{\rm isl})^2$ in $\hat\omega\setminus\omega$ behaves asymptotically like $\Theta_0b$ (see \cite[Theorem~8.11]{FH-b}), where $\Theta_0>1/2$ is a constant (called the de\,Gennes constant). Thus, we have by the min-max principle
\[\frac{b}{2}\int_{\hat\omega\setminus\omega}|e^{\Phi}\psi_n|^2dx\leq \int_{\hat\omega\setminus\omega}|(-i\nabla-b\bA^{\rm isl})(e^{\Phi}\psi_n)|^2dx. \]
In the set $\omega,$ $\Phi=0$ and we control the kinetic energy by the eigenvalue $\hat\ell_n(b)$ as follows
\[\int_{\omega}\Bigl(|(-i\nabla-b\bA^{\rm isl})(e^{\Phi}\psi_n)|^2\leq \int_{\hat\omega}\Bigl(|(-i\nabla-b\bA^{\rm isl})\psi_n|^2\leq \hat\ell_n(b).\]
Returning back to \eqref{eq:Agmon-zero island} and using that $|\nabla t_\omega(x)|=\mathbf 1_{\omega}(x)$ a.e., we write after splitting the integral over $\omega$ and $\hat\omega\setminus\omega$,
\[\int_{\hat\omega\setminus\omega}\Bigl(\frac12|(-i\nabla-b\bA^{\rm isl})(e^{\Phi}\psi_n)|^2+\bigl(\frac{b}{4}-\epsilon_0^2b-\hat\ell_n(b) \bigr)|e^{\Phi}\psi_n|^2\Bigr)dx\leq \frac{2}{b}\hat\ell_n(b).\]
To conclude, we choose $\epsilon_0=1/4$ and $b_0$ sufficiently large so that $\frac14b-\hat\ell_n(b)\geq \frac18 b$.
\end{proof}
\subsection{Quasimodes}

We now move to the construction of quasimodes using the eigenfunctions $\psi_n$, $n\in\mathbb N_0$, of the auxiliary operator $\mathcal L(b)$. Choose $\chi\in C_c^\infty(\R^2;[0,1])$  such that $\chi=1$ in $\overline\omega$ and $\mathrm{supp\,}\chi\subset\hat\omega$. Let $h=1/b$ and
\[u_n(x)=\chi(x)\psi_n(x)\times (\mbox{gauge transformation}).\]
Then, by Proposition~\ref{prop:dec-ef-zero island},
\[
\begin{gathered}
    \|u_n\|_{L^2(\R^2)}=1+\mathcal O(e^{-c/h^{1/2})},\\
    \|(P(h)-h^2\hat\ell_n(1/h))u_n\|_{L^2(\R^2)}=\mathcal O(e^{-c/h^{1/2}}),
\end{gathered}
\]
for some constant $c>0$. 

\begin{proof}[Proof of Theorem~\ref{thm:zero island}]
Applying Theorem~\ref{thm:TZ}  with the quasimodes $(u_n)$ and the energy levels $(h^2\hat\ell_n(1/h)$, we obtain resonances
\[z_n\in [h^2\hat\ell_n(1/h)-h^{-2}e^{-c/2h^{1/2}}]+i[ -h^3e^{-c/h^{1/2}},0].\]
To conclude, we use that $\hat\ell_n(1/h)\to \ell_n$ as $h\to0^+$.
\end{proof}

\subsection*{Acknowledgments} The authors would like to thank Joachim Asch,  William Borrelli and Davide Fermi for fruitful discussions. A.K.  is partially supported by a startup fund at AUB (grant no. 513125).)

\end{document}